\documentclass[11pt]{article}
\usepackage{graphicx, subfigure}
\usepackage{epsf}
\usepackage{epsfig}
\usepackage{latexsym}
\usepackage{times}
\usepackage{mathptm}
\usepackage{defs}

\textwidth6.5in \textheight9in \oddsidemargin 0pt \evensidemargin 0pt
\topmargin -47pt

\newtheorem{lemma}{Lemma}
\newtheorem{theorem}{Theorem}
\newenvironment{proof}{\textit{Proof:}~}{\hfill$\Box$\par\vskip1em}

\newcommand{\Omit}[1]{}
\newcommand{\CONF}{{\cal C}}
\hyphenation{op-tical net-works semi-conduc-tor}

\begin{document}

\title{Asynchronous mobile robot gathering from symmetric configurations without global multiplicity detection}

\author{Sayaka Kamei\footnote{
  This work is supported in part by a Grant-in-Aid for Young
  Scientists ((B)22700074) of JSPS.}\\
   Dept. of Information Engineering, Hiroshima University, Japan \\
 s-kamei@se.hiroshima-u.ac.jp \\
   \and
   Anissa Lamani \\
   Universit\'{e} de picardie Jules Vernes, Amiens, France \\
  anissa.lamani@u-picardie.fr\\
  \and
   Fukuhito Ooshita \\
  Graduate School of Information Science and Technology, Osaka University, Suita, 565-0871, Japan \\
  f-oosita@ist.osaka-u.ac.jp\\
  \and
   S\'{e}bastien Tixeuil \\
  LIP6 UMR 7606, Universit\'{e} Pierre et Marie Curie, France \\
  Sebastien.Tixeuil@lip6.fr
   }

\date{}
\maketitle

\begin{abstract}
We consider a set of $k$ autonomous robots that are endowed with visibility sensors (but that are otherwise unable to communicate) and motion actuators. Those robots must collaborate to reach a single vertex that is unknown beforehand, and to remain there hereafter. Previous works on gathering in ring-shaped networks suggest that there exists a tradeoff between the size of the set of potential initial configurations, and the power of the sensing capabilities of the robots (\emph{i.e.} the larger the initial configuration set, the most powerful the sensor needs to be).
We prove that there is no such trade off. We propose a gathering protocol for an odd number of robots in a ring-shaped network that allows symmetric but not periodic configurations as initial configurations, yet uses only local weak multiplicity detection. Robots are assumed to be anonymous and oblivious, and the execution model is the non-atomic CORDA model with asynchronous fair scheduling. Our protocol allows the largest set of initial configurations (with respect to impossibility results) yet uses the weakest multiplicity detector to date. The time complexity of our protocol is $O(n^2)$, where $n$ denotes the size of the ring. Compared to previous work that also uses local weak multiplicity detection, we do \emph{not} have the constraint that $k<n/2$ (here, we simply have $2<k<n-3$).
\end{abstract}

%
\centerline{{\bf Keywords}: Gathering, Discrete Universe, Local Weak Multiplicity Detection, Asynchrony, Robots. }

\section{Introduction}

We consider autonomous robots that are endowed with visibility sensors (but that are otherwise unable to communicate) and motion actuators. Those robots must collaborate to solve a collective task, namely \emph{gathering}, despite being limited with respect to input from the environment, asymmetry, memory, etc. The area where robots have to gather is modeled as a graph and the gathering task requires every robot to reach a single vertex that is unknown beforehand, and to remain there hereafter.

Robots operate in \emph{cycles} that comprise \emph{look}, \emph{compute}, and \emph{move} phases. The look phase consists in taking a snapshot of the other robots positions using its visibility sensors. In the compute phase a robot computes a target destination among its neighbors, based on the previous observation. The move phase simply consists in moving toward the computed destination using motion actuators. We consider an asynchronous computing model, i.e., there may be a finite but unbounded time between any two phases of a robot's cycle. Asynchrony makes the problem hard since a robot can decide to move according to an old snapshot of the system and different robots may be in different phases of their cycles at the same time.

Moreover, the robots that we consider here have weak capacities: they are \emph{anonymous} (they execute the same protocol and have no mean to distinguish themselves from the others), \emph{oblivious} (they have no memory that is persistent between two cycles), and have no compass whatsoever (they are unable to agree on a common direction or orientation in the ring). 

\subsection{Related Work}\label{sec:RW}

While the vast majority of literature on coordinated distributed robots considers that those robots are evolving in a \emph{continuous} two-dimensional Euclidean space and use visual sensors with perfect accuracy that permit to locate other robots with infinite precision, a recent trend was to shift from the classical continuous model to the \emph{discrete} model. In the discrete model, space is partitioned into a \emph{finite} number of locations. This setting is conveniently represented by a graph, where nodes represent locations that can be sensed, and where edges represent the possibility for a robot to move from one location to the other. Thus, the discrete model restricts both sensing and actuating capabilities of every robot. For each location, a robot is able to sense if the location is empty or if robots are positioned on it (instead of sensing the exact position of a robot). Also, a robot is not able to move from a position to another unless there is explicit indication to do so (\emph{i.e.}, the two locations are connected by an edge in the representing graph). The discrete model permits to simplify many robot protocols by reasoning on finite structures (\emph{i.e.}, graphs) rather than on infinite ones. However, as noted in most related papers~\cite{Marco06,Kowalski04,davi07,Devismes09,Lamani10,Blin10,Izumi10,Flocchin04,Klasing08,Klasing06}, this simplicity comes with the cost of extra symmetry possibilities, especially when the authorized paths are also symmetric.

In this paper, we focus on the discrete universe where two main problems have been investigated under these weak assumptions. The \textit{exploration problem} consists in exploring a given graph using a minimal number of robots. Explorations come in two flavours: \emph{with stop} (at the end of the exploration all robots must remain idle) \cite{davi07,Devismes09,Lamani10} and \emph{perpetual} (every node is visited infinitely often by every robot) \cite{Blin10}. The second studied problem is the \emph{gathering} problem where a set of robots has to gather in one single location, not defined in advance, and remain on this location \cite{Izumi10,Flocchin04,Klasing08,Klasing06}. 

The gathering problem was well studied in the continuous model with various assumptions \cite{Ciel03,Ciel04,Flocchin05,Prencipe05}. In the discrete model, deterministic algorithms have been proposed to solve the gathering problem in a ring-shaped network, which enables many problems to appear due to the high number of symmetric configurations. In \cite{Marco06,Kowalski04,Dessmark06}, symmetry was broken by enabling robots to distinguish themselves using labels, in \cite{Flocchin04}, symmetry was broken using tokens. The case of anonymous, asynchronous and oblivious robots was investigated only recently in this context. It should be noted that if the configuration is periodic and edge symmetric, no deterministic solution can exist~\cite{Klasing06}. The first two solutions \cite{Klasing06,Klasing08} are complementary: \cite{Klasing06} is based on breaking the symmetry whereas \cite{Klasing08} takes advantage of symmetries. However, both \cite{Klasing06} and \cite{Klasing08} make the assumption that robots are endowed with the ability to distinguish nodes that host one robot from nodes that host two robots or more in the entire network (this property is referred to in the literature as \emph{global} weak multiplicity detection). This ability weakens the gathering problem because it is sufficient for a protocol to ensure that a single multiplicity point exists to have all robots gather in this point, so it reduces the gathering problem to the creation of a single multiplicity point.

Investigating the feasibility of gathering with weaker multiplicity detectors was recently addressed in \cite{Izumi10}. In this paper, robots are only able to test that their current hosting node is a multiplicity node (\emph{i.e.} hosts at least two robots). This assumption (referred to in the literature as \emph{local} weak multiplicity detection) is obviously weaker than the global weak multiplicity detection, but is also more realistic as far as sensing devices are concerned. The downside of \cite{Izumi10} compared to \cite{Klasing08} is that only rigid configurations (\emph{i.e.} non symmetric configuration) are allowed as initial configurations (as in  \cite{Klasing06}), while \cite{Klasing08} allowed symmetric but not periodic configurations to be used as initial ones. Also, \cite{Izumi10} requires that $k < n/2$ even in the case of non-symmetric configurations.

\subsection{Our Contribution}

We propose a gathering protocol for an odd number of robots in a ring-shaped network that allows symmetric but not periodic configurations as initial configurations, yet uses only local weak multiplicity detection. Robots are assumed to be anonymous and oblivious, and the execution model is the non-atomic CORDA model with asynchronous fair scheduling. Our protocol allows the largest set of initial configurations (with respect to impossibility results) yet uses the weakest multiplicity detector to date. The time complexity of our protocol is $O(n^2)$, where $n$ denotes the size of the ring. By contrast to \cite{Izumi10}, $k$ may be greater than $n/2$, as our constraint is simply that $2<k<n-3$ and $k$ is odd.

\section{Preliminaries}

\subsection{System Model}\label{sec:Model}
We consider here the case of an anonymous, unoriented and undirected ring of $n$ nodes $u_{0}$,$u_{1}$,..., $u_{(n-1)}$ such as $u_{i}$ is connected to both $u_{(i-1)}$ and $u_{(i+1)}$. 
Note that since no labeling is enabled (anonymous), there is no way to distinguish between nodes, or between edges. 

On this ring, $k$ robots operate in distributed way in order to accomplish a common task that is to gather in one location not known in advance. 
We assume that $k$ is odd.
The set of robots considered here are \textit{identical}; they execute the same program using no local parameters and one cannot distinguish them using their appearance, and are \textit{oblivious}, which means that they have no memory of past events, they can't remember the last observations or the last steps taken before. 
In addition, they are unable to communicate directly, however, they have the ability to sense the environment including the position of the other robots. 
Based on the configuration resulting of the sensing, they decide whether to move or to stay idle. 
Each robot executes cycles infinitely many times, 
~(1)~first, it catches a sight of the environment to see the position of the other robots (look phase), 
~(2)~according to the observation, it decides to move or not (compute phase), 
~(3)~if it decides to move, it moves to its neighbor node towards a target destination (move phase). 

At instant $t$, a subset of robots are activated by an entity known as \textit{the scheduler}.
The scheduler can be seen as an external entity that selects some robots for execution, this scheduler is considered to be fair, which means that, all robots must be activated infinitely many times. 
The \textit{CORDA model} \cite{Pre01} enables the interleaving of phases by the scheduler 
(For instance, one robot can perform a look operation while another is moving).
The model considered in our case is the CORDA model with the following constraint: the Move operation is instantaneous \textit{i.e.} when a robot takes a snapshot of its environment, it sees the other robots on nodes and not on edges. However, since the scheduler is allowed to interleave the different operations, robots can move according to an outdated view since during the Compute phase, some robots may have moved.

During the process, some robots move, and at any time occupy nodes of the ring, their positions form a configuration of the system at that time. 
We assume that, at instant $t=0$ (\textit{i.e.}, at the initial configuration), some of the nodes on the ring are occupied by robots, such as, each node contains at most one robot. 
If there is no robot on a node, we call the node \textit{empty node}.
The segment $[u_p,u_q]$ is defined by the sequence $(u_p,u_{p+1},\cdots,u_{q-1},u_q)$ of consecutive nodes in the ring, 
such as all the nodes of the sequence are empty except $u_p$ and $u_q$ that contain at least one robot.
The distance $D_p^t$ of segment $[u_p,u_q]$ in the configuration of time $t$ is equal to the number of nodes in $[u_p,u_q]$ minus $1$.
We define a \textit{hole} as the maximal set of consecutive empty nodes. 
That is, in the segment $[u_p,u_q]$, $(u_{p+1},\cdots,u_{q-1})$ is a hole. 
The size of a hole is the number of free nodes that compose it, the border of the hole are the two empty nodes who are part of this hole, having one robot as a neighbor.

We say that there is a \textit{tower} at some node $u_{i}$, if at this node there is more than one robot (Recall that this tower is distinguishable only locally). 

When a robot takes a snapshot of the current configuration on node $u_{i}$ at time $t$, it has a \textit{view} of the system at this node.
In the configuration $C(t)$, we assume $[u_1,u_2],[u_2,u_3],\cdots,[u_w,u_1]$ are consecutive segments in a given direction of the ring.
Then, the view of a robot on node $u_1$ at $C(t)$ is represented by \\$(\max\{(D_1^t,D_2^t,\cdots,D_w^t),(D_w^t,D_{w-1}^t,\cdots,D_1^t)\}, m_1^t)$, where
$m_1^t$ is true if there is a tower at this node, and sequence $(a_i,a_{i+1},\cdots, a_j)$ is larger than $(b_i,b_{i+1},\cdots,b_j)$ if there is $h (i\leq h\leq j)$ such that $a_l=b_l$ for $i\leq l \leq h-1$ and $a_h>b_h$.
It is stressed from the definition that robots don't make difference between a node containing one robot and those containing more. 
However, they can detect $m^t$ of the current node, \textit{i.e.} whether they are alone on the node or not (they have a local weak multiplicity detection). 

When $(D_1^t,D_2^t,\cdots,D_w^t)=(D_w^t,D_{w-1}^t,\cdots,D_1^t)$, we say that the view on $u_i$ is \textit{symmetric}, otherwise we say that the view on $u_{i}$ is \textit{asymmetric}. 
Note that when the view is symmetric, both edges incident to $u_{i}$ look identical to the robot located at that node. 
In the case the robot on this node is activated we assume the worst scenario allowing the scheduler to take the decision on the direction to be taken.  

Configurations that have no tower are classified into three classes in \cite{Klasing08-j}.
Configuration is called \textit{periodic} if it is represented by a configuration of at least two copies of a sub-sequence.
Configuration is called \textit{symmetric} if the ring contains a single axis of symmetry.
 Otherwise, the configuration is called \textit{rigid}. 
For these configurations, the following lemma is proved in \cite{Klasing06}.
\begin{lemma}
If a configuration is rigid, all robots have distinct views. If a configuration is symmetric and non-periodic, there exists exactly one axis of symmetry.
\end{lemma}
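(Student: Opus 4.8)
The plan is to reduce everything to elementary properties of the cyclic sequence of segment distances, which is legitimate here because the statement concerns tower-free configurations, so the multiplicity bit is identically false and plays no role. A tower-free configuration with $w$ occupied nodes $u_1,\dots,u_w$ (in cyclic order) is determined, up to rotation of the ring, by the cyclic sequence $\sigma=(d_1,\dots,d_w)$ of segment distances, with $\sum_i d_i=n$; reading $\sigma$ from $u_i$ in the two directions gives the two length-$w$ sequences $A_i$ (forward) and $B_i$ (backward, i.e.\ the appropriate rotation of the reversal $\bar\sigma$), and by definition the view at $u_i$ is the lexicographically larger of $A_i$ and $B_i$. First I would record two translations: (i) the configuration is periodic iff $\sigma$ is fixed by some cyclic rotation by $c$ with $0<c<w$, equivalently $\sigma$ is a block repeated $m\ge 2$ times; and (ii) the configuration admits an axis of symmetry iff some rotation of $\sigma$ equals $\bar\sigma$. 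Both follow from the facts that a nontrivial automorphism of $C_n$ preserving the occupied set is either a rotation (which shifts all occupied nodes by a constant index) or a reflection (an axis), and that a nonidentity rotation of $C_n$ fixes no node, so the induced index shift is nonzero.

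For the first sentence, I would argue by contraposition: assume two distinct occupied nodes $u_i\ne u_j$ have the same view $V$, and show the configuration is periodic or symmetric, hence not rigid. Since $V\in\{A_i,B_i\}$ and $V\in\{A_j,B_j\}$, there are four cases. If $A_i=A_j$ or $B_i=B_j$, then $\sigma$ (respectively $\bar\sigma$, hence $\sigma$) is fixed by the nonzero rotation carrying index $i$ to index $j$, so by (i) the configuration is periodic. If $A_i=B_j$ or $B_i=A_j$, then some rotation of $\sigma$ equals some rotation of $\bar\sigma$, hence some rotation of $\sigma$ equals $\bar\sigma$, and by (ii) the configuration is symmetric. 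In all cases it is not rigid, which is the required contrapositive.

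For the second sentence, a symmetric configuration has at least one axis by definition, so it remains to rule out two. Suppose it had two distinct axes, corresponding to distinct reflections $s,s'$ of $C_n$, each preserving the occupied set. Then $\rho=s\circ s'$ also preserves the occupied set, and in the dihedral group $D_n$ the composition of two distinct reflections is a nonidentity rotation. Such a rotation fixes no node of $C_n$, so it acts on the occupied nodes as a constant nonzero index shift, making $\sigma$ fixed by a nontrivial cyclic rotation; by (i) the configuration is periodic, contradicting non-periodicity. Hence the axis is unique.

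The routine parts are the two translations (i), (ii) and the case bookkeeping above. The step I expect to require the most care is making those translations precise across the parity cases of $C_n$: an axis of the $n$-cycle may pass through two antipodal nodes, through two antipodal edge-midpoints, or through one node and one opposite edge-midpoint, and in each case one must check that a reflection preserving the occupied set really corresponds to $\bar\sigma$ being a rotation of $\sigma$, and that the composition of two such reflections acts on the occupied nodes as a genuine nonzero index shift rather than the identity. Once this is nailed down, both sentences follow from the single observation that a nontrivial symmetry of a non-periodic tower-free configuration must reverse orientation and is therefore unique.
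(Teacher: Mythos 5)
The paper does not actually prove this lemma itself --- it is imported verbatim from \cite{Klasing06} ("the following lemma is proved in \cite{Klasing06}"), so there is no in-paper argument to compare against; your proof has to stand on its own, and it does. The reduction to the cyclic distance word $\sigma$ is the right move: in a tower-free configuration the multiplicity bit is uniformly false, the view at $u_i$ is $\max\{A_i,B_i\}$, and equality of views at $u_i\neq u_j$ forces one of $A_i=A_j$, $B_i=B_j$, $A_i=B_j$, $B_i=A_j$, yielding either a nontrivial rotation fixing $\sigma$ (hence periodicity, since $i\not\equiv j \pmod w$) or a rotation of $\sigma$ equal to $\bar\sigma$ (hence an axis); in every case the configuration fails to be rigid, which is the required contrapositive. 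For uniqueness of the axis, the dihedral argument is exactly what is needed: two distinct reflections compose to a nonidentity rotation of $C_n$, which fixes no vertex, hence shifts the occupied set by a nonzero amount modulo $w$ and forces $\sigma$ to be periodic, contradicting non-periodicity. The parity bookkeeping you flag as the delicate step is in fact routine and parity-independent: any reflection preserving the occupied set reverses the cyclic order of the occupied nodes, so it acts on their indices as $m\mapsto a-m$ and on segment distances as $D_m=D_{a-m-1}$, which is precisely the condition that some rotation of $\sigma$ equals $\bar\sigma$; conversely that condition lets you extend the index reflection segment by segment to an orientation-reversing isometry of the whole ring, whose fixed locus is the axis. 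I see no gap.
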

This lemma implies that, if a configuration is symmetric and non-periodic, at most two robots have the same view.

We now define some useful terms that will be used to describe our algorithm. 
We denote by the \textit{inter-distance} $d$ the minimum distance taken among distances between each pair of distinct robots (in term of the number of edges). 
Given a configuration of inter-distance $d$, a \textit{$d$.block} is any maximal elementary path where there is a robot every $d$ edges. 
The border of a $d$.block are the two external robots of the $d$.block. 
The size of a $d$.block is the number of robots that it contains. 
We call the $d$.block whose size is biggest the \textit{biggest $d$.block}.
A robot that is not in any $d$.block is said to be an \textit{isolated robot}. 

We evaluate the time complexity of algorithms with asynchronous rounds. An asynchronous round is defined as the shortest fragment of an execution where each robot performs a move phase at least once.

\subsection{Problem to be solved}\label{sec:Prb}
The problem considered in our work is the gathering problem, where $k$ robots have to agree on one location (one node of the ring) not known in advance in order to gather on it, and that before stopping there forever.

\section{Algorithm}\label{sec:Algo}


To achieve the gathering, we propose the algorithm composed of two phases. The first phase is to build a configuration that contains a single $1$.block and no isolated robots without creating any tower regardless of their positions in the initial configuration (provided that there is no tower and the configuration is aperiodic.)
The second phase is to achieve the gathering from any configuration that contains a single $1$.block and no isolated robots. 
Note that, since each robot is oblivious, it has to decide the current phase by observing the current configuration. 
To realize it, we define a special configuration set $\CONF_{sp}$ which includes any configuration that contains a single $1$.block and no isolated robots. 
We give the behavior of robots for each configuration in $\CONF_{sp}$, and guarantee that the gathering is eventually achieved from any configuration in $\CONF_{sp}$ without moving out of $\CONF_{sp}$. We combine the algorithms for the first phase and the second phase in the following way: Each robot executes the algorithm for the second phase if the current configuration is in $\CONF_{sp}$, and executes one for the first phase otherwise. By this way, as soon as the system becomes a configuration in $\CONF_{sp}$ during the first phase, the system moves to the second phase and the gathering is eventually achieved.

\subsection{First phase: An algorithm to construct a single $1$.block}

In this section, we provide the algorithm for the first phase, that is, the algorithm to construct a configuration with a single $1$.block. The strategy is as follows; 
In the initial configuration, robots search the biggest $d$.block $B_1$, and then robots that are not on $B_1$ move to join $B_1$.
Then, we can get a single $d$.block.
In the single $d$.block, there is a robot on the axis of symmetry because the number of robots is odd.
When the nearest robots from the robot on the axis of symmetry move to the axis of symmetry, then we can get a $d-1$.block $B_2$,
and robots that are not on $B_2$ move toward $B_2$ and join $B_2$.
By repeating this way, we can get a single $1$.block. 

We will distinguish three types of configurations as follows:
\begin{itemize*}
\item \textbf{Configuration of type $1$}.
In this configuration, there is only a single $d$.block such as $d > 1$, that is, all the robots are part of the $d$.block. 
Note that the configuration is in this case symmetric, and since there is an odd number of robots, we are sure that there is one robot on the axis of symmetry.

If the configuration is this type, the robots that are allowed to move are the two symmetric robots that are the closest to the robot on the axis. 
Their destination is their adjacent empty node towards the robot on the axis of symmetry.
(Note that the inter-distance has decreased.)

\item \textbf{Configuration of type $2$}.
In this configuration, all the robots belong to $d$.blocks (that is, there are no isolated robots) and all the $d$.blocks have the same size.

If the configuration is this type, the robots neighboring to hole and with the maximum view are allowed to move to their adjacent empty nodes.
If there exists such a configuration with more than one robot and two of them may move face-to-face on the hole on the axis of symmetry, then they withdraw their candidacy and other robots with the second maximum view are allowed to move.

\item \textbf{Configuration of type $3$}.
In this configuration, the configuration is not type $1$ and $2$, \textit{i.e.}, all the other cases.
Then, there is at least one biggest $d$.block whose size is the biggest.

\begin{itemize*}
 \item If there exists an isolated robot that is neighboring to the biggest $d$.block, then it is allowed to move to the adjacent empty node towards the nearest neighboring biggest $d$.block. 
 If there exist more than one such isolated robots, then only robots that are closest to the biggest $d$.block among them are allowed to move.
 If there exist more than one such isolated robots, then only robots with the maximum view among such isolated robots are allowed to move.
The destination is their adjacent empty nodes towards one of the nearest neighboring biggest $d$.blocks.
\item If there exist no isolated robot that is neighboring to the biggest $d$.block, the robot that does not belong to the biggest $d$.block and is neighboring to the biggest $d$.block is allowed to move.
If there exists more than one such a robot, then only robots with the maximum view among them are allowed to move.
The destination is their adjacent empty node towards one of the nearest neighboring biggest $d$.blocks.
(Note that the size of the biggest $d$.block has increased.)
\end{itemize*}
\end{itemize*} 

\paragraph{Correctness of the algorithm}
In the followings, we prove the correctness of our algorithm.

\begin{lemma}\label{periodic-proof}
From any non-periodic initial configuration without tower, the algorithm does not create a periodic configuration.
\end{lemma}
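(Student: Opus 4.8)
The plan is to argue by contradiction. Suppose some execution that starts from a non-periodic tower-free configuration reaches a periodic configuration; let $C'$ be the first periodic configuration along it and $C$ its immediate predecessor. The second phase acts only within $\CONF_{sp}$, whose configurations are single $1$.blocks (possibly carrying a tower) and hence never periodic, so $C'$ is produced by the first phase; since the first phase never creates a tower, $C$ and $C'$ are tower-free and $C$ is a non-periodic configuration of type $1$, $2$ or $3$. Let $q$ be a period of $C'$: as $q\mid\gcd(n,k)$ and $k$ is odd, $q$ is odd, hence $q\ge 3$, and $C'$ is invariant under a rotation $\rho$ of order $q$, so every structural feature of $C'$ --- a biggest $d$.block, a longest hole, an extremal view, an isolated robot --- occurs identically in each of its $q\ge 3$ copies. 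By Lemma~1 the step $C\to C'$ moves at most two robots, and if two then the mirror pair through the unique axis of $C$; hence $C$ and $C'$ agree outside a bounded, symmetry-respecting set of nodes and differ at only $O(1)$ of the $d$.blocks. I go through the three types.

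\emph{Type $1$:} $C$ is a single $d$.block with $d>1$, so its circular sequence of gaps is $d$ repeated $k-1$ times together with a single strictly longer gap $n-1-(k-1)d$. The move shifts the two robots neighbouring the central robot one step inward, leaving the long gap untouched and turning only the two incident $d$-gaps into $d-1$ (and the next two into $d+1$); hence $C'$ still has a \emph{unique} strictly longest gap, whereas in a configuration of period $q\ge 2$ the longest gap occurs in each of the $q$ copies --- contradiction.

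\emph{Types $2$ and $3$:} The guiding fact is that the move only \emph{concentrates} robots --- it never increases the inter-distance, never decreases the size of a biggest $d$.block, and (type $3$) always moves a robot toward a biggest $d$.block. Write $s$ for the maximum $d$.block size in $C$. Since $C$ and $C'$ differ at only $O(1)$ blocks and, by $\rho$-periodicity, $C'$ has $\ge q\ge 3$ blocks of its common maximum size $s'$, at least one of them is untouched by the move, hence already present at size $s'$ in $C$; so $s'\le s$, and as the move cannot lower the maximum block size, $s'=s$. Now split into three cases. \emph{(i)} The move enlarges some $d$.block to size $s+1$: impossible, since it would give $s'\ge s+1$. \emph{(ii)} No block is enlarged but a robot detaches from its $d$.block and lands as an isolated robot adjacent to a biggest $d$.block: then $\rho$-periodicity makes every biggest $d$.block of $C'$ have such an isolated neighbour, hence so do the biggest blocks untouched by the move, contradicting that in $C$ they had none (type $2$ has no isolated robots, and in type $3$ this branch of the algorithm applies exactly when no isolated robot neighbours a biggest $d$.block). \emph{(iii)} No block changes size, so the move merely slides one isolated robot (or a mirror pair) one step toward a biggest $d$.block: but $\rho$-periodicity then supplies $q-1\ge 2$ other isolated robots, fixed by the move, each strictly closer to a biggest $d$.block than the moved robot was in $C$, so the type-$3$ selection rule (move the isolated robot \emph{closest} to a biggest $d$.block) could not have picked the moved robot --- contradiction.

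The step I expect to need the most care is confirming that cases \emph{(i)--(iii)} really exhaust the moves that the first-phase algorithm can make in types $2$ and $3$ (i.e.\ that a moving robot either joins and enlarges a $d$.block, or detaches to sit as an isolated robot next to a biggest $d$.block, or only slides an already-isolated robot), and that detachments and joins never alter \emph{which} blocks are biggest in a way that breaks the size bookkeeping --- including the degenerate fully edge-symmetric periodic configuration (all gaps equal). A second point to settle is asynchrony: the robot(s) effecting $C\to C'$ may have computed their destination from an outdated snapshot $\tilde C\neq C$, but $\tilde C$ still precedes $C'$, hence is non-periodic, and its computed destination is still a ``towards a biggest $d$.block / towards the central robot'' target of the algorithm, so the structural arguments above still apply.
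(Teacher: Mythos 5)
Your core counting idea --- an odd $k$ forces any periodic configuration to contain at least three congruent copies of every feature, while one step of the algorithm produces at most one or two distinguished features --- is exactly the paper's, and your Type~1 and block-size bookkeeping for Types~2 and~3 are essentially sound \emph{when the moving robots act on up-to-date views}. (One slip in Type~1: when $n=kd+1$ the long outer gap equals $d+1$, so after the move the longest gap is not unique; you must instead count the two adjacent $(d-1)$-gaps, i.e.\ the unique $(d-1)$.block, which is what the paper does.)

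The genuine gap is the treatment of asynchrony, which is where almost all of the paper's proof effort goes. You take $C$ to be the immediate predecessor of the first periodic configuration $C'$, but in the CORDA model the robots realizing the step $C\to C'$ computed their destinations from \emph{earlier} snapshots $\tilde C\neq C$, so nothing ties the algorithm's selection rules to $C$. Your cases (ii) and (iii) derive their contradictions precisely from ``the rule applied to $C$ would not have picked this robot'' (in (iii) another isolated robot is strictly closer in $C$; in (ii) $C$ already contains an isolated robot adjacent to a biggest block, contradicting the branch condition) --- but the rule was applied to $\tilde C$, where the picked robot may have been the legitimate choice and where the block it targets may no longer be biggest in $C$. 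Your closing remark that the destination ``is still a towards-a-biggest-$d$.block target, so the structural arguments still apply'' does not repair this, because those arguments use the selection, not merely the direction of motion. Similarly, the claim that the step $C\to C'$ moves at most two robots forming the mirror pair of $C$ does not follow from Lemma~1: that lemma bounds the number of robots sharing a view in a \emph{single} configuration, whereas pending moves may have been issued from several distinct past configurations, and excluding three or more simultaneous pending moves is itself something to prove. The paper avoids all of this by defining $C$ as the configuration the moving robot $A$ \emph{observed}, and then tracking every interleaving forward from $C$ (the symmetric partner $B$ moving first, becoming the unique isolated robot or creating the unique biggest block, hence being the only enabled robot until it merges, etc.); some version of that forward tracking is unavoidable and is missing from your argument.
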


\begin{proof}
Assume that, after a robot $A$ moves, the system reaches a periodic configuration $C^*$. Let $C$ be the configuration that $A$ observed to decide the movement. The important remark is that, since we assume an odd number of robots, any periodic configuration should have at least three $d$.blocks with the same size or at least three isolated robots.

\noindent\textbf{$C$ is a configuration of type 1}. Then, $C$ has a single $d$.block and there is another robot $B$ that is allowed to move. After configuration $C$, three cases are possible: $A$ moves before $B$ moves, $A$ moves after $B$ moves, or $A$ and $B$ move at the same time. After the movement of all the cases, the configuration $C^*$ has exactly one $(d-1)$.block and thus $C^*$ is not periodic.

\noindent\textbf{$C$ is a configuration of type 2}. Let $s$ be the size of $d$.blocks in $C$ (Remind that all $d$.blocks have the same size). Since the number of robots is odd, $s\ge 3$ holds.
\begin{itemize*}
\item If $C$ is not symmetric, only $A$ is allowed to move. When $A$ moves, it either becomes an isolated robot or joins another $d$.block. Then, $C^*$ has exactly one isolated robot or exactly one $d$.block with size $s+1$. Therefore, $C^*$ is not periodic.

\item If $C$ is symmetric, there is another robot $B$ that is allowed to move in $C$.
\begin{itemize*}
\item If $A$ moves before $B$ moves, $C^*$ is not periodic similarly to the non-symmetric case.
\item If $A$ and $B$ move at the same time, they become isolated robots or join other $d$.blocks. Then, three cases are possible: $C^*$ has exactly two isolated robots, $C^*$ has exactly two $d$.blocks with size $s+1$, or $C^*$ has exactly one $d$.block with size $s+2$. For all the cases $C^*$ is not periodic.

\item Consider the case that $B$ moves before $A$ moves. Then, $B$ becomes an isolated robot or joins another $d$.block. Let $C'$ be the configuration after $B$ moves. If $A$ moves in $C'$, $C^*$ is not periodic since $C^*$ is the same one as $A$ and $B$ move at the same time in $C$. Consequently, the remaining case is that other robots other than $A$ move in $C'$.

First, we consider the case that $B$ joins $d$.block in $C'$. Then, the $d$.block becomes a single biggest $d$.block. This implies that all other robots move toward this $d$.block. Consequently, the following configurations contain exactly one biggest $d$.block, and thus $C^*$ is not periodic.

Second, we consider the case that $B$ is an isolated robot in $C'$. Since only $B$ is an isolated robot in $C'$, only $B$ is allowed to move in $C'$ and it moves toward its neighboring $d$.block. If $A$ moves before $B$ joins the $d$.block, $C^*$ contains exactly two isolated robots, and thus $C^*$ is not periodic. After $B$ joins the $d$.block, $C^*$ is not periodic similarly to the previous case.
\end{itemize*}
\end{itemize*}
\noindent\textbf{$C$ is a configuration of type 3}. Let $s$ be the size of biggest $d$.blocks in $C$.
\begin{itemize*}
\item If $C$ is not symmetric, only $A$ is allowed to move. When $A$ moves, it either becomes an isolated robot or joins another $d$.block. In the latter case, $C^*$ has exactly one $d$.block with size $s+1$, and thus $C^*$ is not periodic. In the previous case, there may exist multiple isolated robots. However, $A$ is the only one isolated robot such that the distance to its neighboring biggest $d$.block is the minimum. This means $C^*$ is not periodic.
\item If $C$ is symmetric, there is another robot $B$ that is allowed to move in $C$.
\begin{itemize*}
\item If $A$ moves before $B$ moves, $C^*$ is not periodic similarly to the non-symmetric case.
\item If $A$ and $B$ move at the same time, they become isolated robots or join other $d$.blocks. In the latter case, $C^*$ has exactly two $d$.block with size $s+1$ or exactly one $d$.block with size $s+2$, and thus $C^*$ is not periodic. In the previous case, $A$ and $B$ are only two isolated robots such that the distance to its neighboring biggest $d$.block is the minimum. For both cases, $C^*$ is not periodic.
\item Consider the case that $B$ moves before $A$ moves. Then, $B$ becomes an isolated robot or joins another $d$.block. Let $C'$ be the configuration after $B$ moves. If $A$ moves in $C'$, $C^*$ is not periodic since $C^*$ is the same one as $A$ and $B$ move at the same time in $C$. Consequently, the remaining case is that other robots other than $A$ move in $C'$.

First, we consider the case that $B$ joins $d$.block in $C'$. Then, the $d$.block becomes a single biggest $d$.block. This implies that all other robots move toward this $d$.block. Consequently, the following configurations contain exactly one biggest $d$.block, and thus $C^*$ is not periodic.

Second, we consider the case that $B$ is an isolated robot in $C'$. Then, $B$ is the only one isolated robot such that the distance to its neighboring biggest $d$.block is the minimum. Consequently, only $B$ is allowed to move in $C'$ and it moves toward its neighboring $d$.block. Even if $A$ moves before $B$ joins the $d$.block, $B$ is the only one isolated robot such that the distance to its neighboring biggest $d$.block is the minimum. Consequently, in this case $C^*$ is not periodic. After $B$ joins the $d$.block, $C^*$ is not periodic similarly to the previous case.
\end{itemize*}
\end{itemize*}
For all cases, $C^*$ is not periodic; thus, a contradiction.
\end{proof}

\begin{lemma}\label{tower-proof}
No tower is created before reaching a configuration with a single 1.block for the first time.
\end{lemma}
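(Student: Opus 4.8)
The plan is to argue by contradiction: assume a tower is created for the first time by some move, and examine the configuration $C$ that the moving robot(s) observed. Since no tower exists yet, $C$ is tower-free and (by Lemma~\ref{periodic-proof}) non-periodic, so $C$ is rigid or symmetric with a unique axis, and $C$ falls into exactly one of the three configuration types. A tower can only be created when a robot $A$ moves onto a node already occupied, or when two robots $A$ and $B$ swap onto the same node. So for each configuration type I would enumerate which robots are allowed to move and toward which empty node, and check that the target is genuinely empty and stays empty until $A$ arrives. The key structural fact to exploit is that in every rule the designated destination is an \emph{adjacent empty node}, and the direction of motion is always ``toward'' a distinguished structure (the robot on the axis, the nearest biggest $d$.block); I would show this orientation prevents two moving robots from targeting the same node unless that situation is explicitly excluded by the rule.

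Concretely, I would go type by type. For type~1: the only movers are the two symmetric robots closest to the axis robot, each stepping one node toward the axis; since $d>1$ the gap on each side of the axis robot is at least $2$ empty nodes wide, so neither mover lands on the axis robot, and the two movers approach from opposite sides, so they cannot collide with each other — no tower. For type~2: a mover steps into an adjacent empty node of a hole; a tower could only form if two robots enter the same hole from its two borders and that hole has size exactly $1$ — but this is precisely the ``face-to-face on the hole on the axis of symmetry'' case, which the algorithm explicitly forbids by making those robots withdraw, so again no tower; I also need to note that a single mover entering a hole of size $\ge 1$ lands on an empty node and, because it moves only one step, does not reach the robot at the far border unless the hole had size $1$, handled as above. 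For type~3: an isolated robot (or a border robot of a non-biggest $d$.block) moves one step toward the nearest biggest $d$.block; its destination is an adjacent empty node by the definition of ``isolated'' / ``neighboring'', and because of asynchrony I must also rule out that, in the meantime, another selected robot moved into that node — here I would use that the selection rule picks robots \emph{closest} to the biggest $d$.block with \emph{maximum view}, so a symmetric pair moves symmetrically toward two distinct biggest $d$.blocks (or two ends of one), and an asymmetric configuration has a unique mover; in neither case do two movers share a target.

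The main obstacle is asynchrony: a robot may act on an outdated snapshot $C$ while the actual configuration has already changed because another selected robot moved. So the real work is to show that between $C$ and the moment $A$ executes its move, $A$'s intended destination node cannot have become occupied. I would handle this by observing that the set of robots selected to move in $C$ is fixed by $C$, each such robot has a fixed one-step destination, and these destinations are pairwise distinct and distinct from all currently occupied nodes (the per-type analysis above); moreover, once one of them has moved, the configuration is no longer of the type $C$ was, but I can invoke Lemma~\ref{periodic-proof}'s style of reasoning to track that the ``remaining'' pending move of $A$ still lands on an empty node — essentially because $A$'s target was adjacent to $A$ and on the side away from where the other mover went. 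Carefully chaining these observations over all interleavings of the (at most two) concurrently enabled moves, in each of the three types, yields that the first tower-creating move cannot exist, a contradiction.
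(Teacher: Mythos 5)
Your overall skeleton matches the paper's: argue by contradiction, note that two robots can be simultaneously enabled only in a symmetric configuration (otherwise the unique maximum-view mover is alone and views do not change until it moves), and then do a case analysis over the three configuration types. Your per-type checks that the destinations of the robots enabled in the observed configuration are empty and pairwise distinct are in the right spirit (modulo a small slip in type~1: for $d=2$ the hole between the axis robot and its closest neighbours contains a single empty node, not at least two, though the conclusion still holds because each mover takes only one step into that empty node).

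The genuine gap is in your treatment of asynchrony, which is where essentially all of the paper's proof lives. You reduce the danger to the (at most two) moves enabled in the observed configuration $C$ having distinct empty targets, and to the remaining pending mover still landing on an empty node after the other one has moved. But in the CORDA model the scheduler may hold robot $B$'s pending move for a long but finite time while other robots complete arbitrarily many full look--compute--move cycles. The collision that must be excluded is therefore between the outdated $B$ and a robot $C$ that was \emph{not} enabled in $C$ at all: $C$ becomes enabled only after $A$'s move (or a chain of subsequent moves) has changed the configuration and hence the set of allowed movers. The paper's proof handles exactly this: for each type it traces the whole sequence of configurations reachable while $B$'s move is still pending --- for instance, $A$ becomes isolated, walks several steps, joins a $d$.block which thereby becomes the unique biggest one, and only then do new robots neighbouring that block become enabled --- and argues that every such newly enabled robot either is $B$ itself (type~1, where after $A$'s move the closest isolated robot to the new $(d-1)$.block is precisely $B$) or moves toward a block to which $B$ is not adjacent, so it can never move face-to-face with $B$. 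Your phrase about ``chaining these observations over all interleavings of the (at most two) concurrently enabled moves'' does not capture this multi-step evolution of the enabled set, and without that tracking the contradiction is not established.
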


\begin{proof}
If each robot that is allowed to move immediately moves until other robots take new snapshots, that is, no robot has outdated view, then it is clear that no tower is created.

Assume that a tower is created.
Then, two robots $A$ and $B$ were allowed to move in a configuration, but the scheduler activates only $A$, and other robot $C$ takes a snapshot after the movement of $A$ before $B$ moves.
Because $B$ moves based on the outdated view, if $B$ and $C$ moves face-to-face, then $B$ and $C$ may make a tower.

By the algorithm, in a view of a configuration, two robots are allowed to move if and only if the configuration is symmetric,
because the maximum view is only one for each configuration other than symmetric configurations.
If the configuration is not symmetric, only one robot $E$ is allowed to move and the view of each robot does not change until $E$ moves.
Therefore, we should consider only symmetric configurations, and $A$ and $B$ are two symmetric robots.

\begin{itemize*}
\item Consider the configuration of type $1$ as before $A$ moves.
Then, there is a robot $F$ on the axis of symmetry, and $F$ is not allowed to move.
The robots $A$ and $B$ are neighbor to $F$.
In the case where the scheduler activates only $A$ and $A$ moves, then $A$ and $F$ create a new $d-1$.block.
By the algorithm of the type $3$(1), the closest isolated robot to this new $d-1$.bock is allowed to move in the new view.
However, it is robot $B$, because the distance from $B$ to the $d-1$.block is $d$ but from other neighbor of the $d-1$.block is $d+1$.
Therefore, other robots cannot move, and this is a contradiction.

\item Consider the configuration of type $2$ as before $A$ moves.
Then, by the exception, the robots that are face-to-face on the hole on the axis of symmetry are not allowed to move.
Therefore, $A$ and $B$ are not neighboring to such hole on the axis of symmetry.
After $A$ moves, $A$ becomes isolated or joins the other $d$.block, and the $d$.block $A$ belonged to becomes not the biggest on the new view.
If $A$ becomes isolated, by the algorithm of type $3$(1), $A$ are allowed to move on the new view.
Therefore, $A$ can move and others than $B$ cannot move until it joins the neighboring biggest $d$.block.
After $A$ joins the other $d$.block, the configuration becomes type $3$(2) and the $d$.block $D$ $A$ belongs to is the biggest while $B$ does not move.
After that, other processes $C$ neighboring to $D$ can move toward $D$.
Because $B$ is not neighboring to $D$, $C$ and $B$ cannot move face-to-face.
This is a contradiction.

\item Consider the configuration of type $3$(1) as before $A$ moves.
Then, $A$ and $B$ are isolated robots that are neighboring to the biggest $d$.block.
Their destinations are the empty nodes towards the nearest neighboring biggest $d$.blocks respectively.
The configuration after $A$ move is type $3$(1) until $A$ join the nearest neighboring biggest $d$.block $D$.
Then, $A$ and $B$ are allowed to move and others cannot move until $A$ joins $D$.
Consider the case after $A$ joins $D$.
\begin{itemize*}
\item If there exist other isolated robot $C$ neighboring to $D$, then it can move toward $D$ because the configuration becomes type $3$(1).
However, $C$ and $B$ cannot move face-to-face because $C$ are neighboring to $D$ and $B$ moves toward the border of other $d$.block.
This is a contradiction.
\item If there does not exist other isolated robot neighboring to $D$, then the configuration becomes type $3$(2) on the new view.
Then, the robot $C$ neighboring to $D$ can move toward $D$.
However, $C$ and $B$ cannot move face-to-face because $C$ are neighboring to $D$ and $B$ moves toward the other neighboring $d$.block.
This is a contradiction.
\end{itemize*}

\item Consider the configuration of type $3$(2) as before $A$ moves.
Then, $A$ and $B$ are neighboring to the biggest $d$.blocks and are members of any (not biggest) $d$.blocks.
After $A$ moves, $A$ becomes isolated until $A$ joins the biggest $d$.block $D$ that is the destination of $A$.
By the algorithm of type $3$(1), $A$ is allowed to move on the new view and others cannot move.
After $A$ joins $D$, $D$ becomes biggest, the configuration becomes type $3$(2).
Then, the other process $C$ neighboring to $D$ can move to $D$.
However, $B$ and $C$ cannot move face-to-face because $C$ are neighboring to $D$ and $B$ moves toward the other neighboring $d$.block.
This is a contradiction.                                                                         
\end{itemize*}
From the cases above, we can deduct that no tower is created before the gathering process.
\end{proof}

From Lemmas \ref{periodic-proof} and \ref{tower-proof}, the configuration is always non-periodic and does not have a tower from any non-periodic initial configuration without tower. 
Since configurations are not periodic, there exist one or two robots that are allowed to move unless the configuration contains a single $1$.block.

\begin{lemma}\label{lem-time1}
Let $C$ be a configuration such that its inter-distance is $d$ and the size of the biggest $d$.block is $s$ ($s \le k-1$). From configuration $C$, the configuration becomes such that the size of the biggest $d$.block is at least $s+1$ in $O(n)$ rounds.
\end{lemma}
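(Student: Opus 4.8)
The plan is to track a potential function measuring progress toward enlarging the biggest $d$.block, and to argue that within $O(n)$ asynchronous rounds this function strictly improves. By Lemmas~\ref{periodic-proof} and~\ref{tower-proof}, throughout the execution started at $C$ the configuration stays non-periodic and tower-free, so at every step one or two robots are enabled to move (and if two, the configuration is symmetric and they are the mirror pair). The key structural observation is that in a configuration of type~$3$ the enabled robots always move \emph{toward} a biggest $d$.block, and in types~$1$ and~$2$ the move strictly decreases the inter-distance or creates a strictly larger block; so no move ever shrinks the biggest $d$.block size $s$ or increases $d$ without the size of the biggest $d$.block in the new (smaller-$d$) sense being at least $s+1$. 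Hence it suffices to show the ``target'' robot actually reaches the biggest block within $O(n)$ rounds.

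First I would reduce to the type~$3$ case: if $C$ is type~$1$ or type~$2$, the two (or one) enabled robots need to traverse at most $O(n)$ edges before the inter-distance drops or a block of size $s+1$ appears, and since each robot performs a move at least once per round (definition of asynchronous round), this takes $O(n)$ rounds; moreover by the algorithm no other robot becomes enabled in the meantime that would redirect this progress (this is essentially what the proof of Lemma~\ref{tower-proof} already establishes case by case). Then for type~$3$: let $B$ be the biggest $d$.block with size $s$. By the rule, the unique enabled robot (or symmetric pair) is the robot nearest $B$ among the eligible candidates, at distance at most $n$ from $B$, and its destination is fixed as the empty node toward $B$. I would argue that as this robot advances, the configuration remains type~$3$ with $B$ still a biggest $d$.block and this robot still the designated mover — because no competing robot can get closer to $B$, and the tie-breaking by view is stable — until the robot is adjacent to $B$ and joins it, producing a block of size $s+1$. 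Counting: at most $n$ edge-traversals, at most one per round for the relevant robot, hence $O(n)$ rounds.

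The main obstacle I expect is the asynchrony bookkeeping: a slow robot $A$ with an outdated view, or the interleaving of the symmetric pair, could in principle cause the designated mover to change or cause $B$ to stop being a (or the) biggest $d$.block before the target robot arrives. To handle this I would lean heavily on Lemma~\ref{tower-proof}'s case analysis, which already shows that in each type the ``other'' robots cannot move (or cannot move face-to-face) until the current mover completes its step, so the set of enabled robots is effectively frozen along the path; combined with non-periodicity (Lemma~\ref{periodic-proof}) ensuring the axis and the identity of the biggest block are preserved, the designated mover is uniquely determined at every intermediate configuration and makes monotone progress toward $B$. The only subtlety left is the symmetric case where two robots march toward $B$ (or toward two mirror biggest blocks) simultaneously: here both make progress in parallel, so the $O(n)$ bound still holds, and when the first joins, either $B$ already has size $s+1$ or the configuration becomes asymmetric with a unique larger block, which is again the desired conclusion. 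I would also note that the constant hidden in $O(n)$ is small — at most a couple of traversals of the ring — but chasing the exact constant is not needed for the statement.
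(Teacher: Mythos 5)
Your proposal is correct and takes essentially the same route as the paper's own (much terser) proof: identify the enabled robot neighboring the biggest $d$.block, observe that after leaving its position it either joins that block immediately or becomes an isolated robot that the type~$3$(1) rule keeps driving toward the block one step per round, and conclude after at most $O(n)$ steps. Your additional bookkeeping (the potential-function framing, the stability of the designated mover, and the deferral to Lemma~\ref{tower-proof} for the asynchrony concerns) only elaborates what the paper leaves implicit; note also that the type~$1$ case you discuss is vacuous here, since $s\le k-1$ excludes a single $d$.block containing all $k$ robots.
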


\begin{proof}
From configurations of type 2 and type 3, at least one robot neighboring to the biggest $d$.block is allowed to move. Consequently, the robot moves in $O(1)$ rounds. If the robot joins the biggest $d$.block, the lemma holds.

If the robot becomes an isolated robot, the robot is allowed to move toward the biggest $d$.block by the configurations of type 3 (1). Consequently the robot joins the biggest $d$.block in $O(n)$ rounds, and thus the lemma holds.
\end{proof}

\begin{lemma}\label{lem-time2}
Let $C$ be a configuration such that its inter-distance is $d$. From configuration $C$, the configuration becomes such that there is only single $d$.block in $O(kn)$ rounds.
\end{lemma}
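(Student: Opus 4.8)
The plan is to obtain the single $d$.block by iterating Lemma~\ref{lem-time1}. If $C$ is already of type~1 (or, when $d=1$, is already a single $1$.block), it contains a single $d$.block and we are done in $0$ rounds, so assume $C$ is of type~2 or type~3, and let $s_0$ be the size of the biggest $d$.block of $C$. Note first that $s_0 \ge 2$: since the inter-distance equals $d$, the minimum inter-robot distance $d$ is realised by some pair of robots, and those two robots belong to a common $d$.block, so the biggest $d$.block has size at least $2$.

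The heart of the argument is the invariant that, as long as the biggest $d$.block has size $s$ with $2 \le s \le k-1$, the inter-distance stays exactly $d$ and the configuration stays out of type~1, so that Lemma~\ref{lem-time1} keeps applying. To see that the inter-distance cannot drop below $d$: by Lemmas~\ref{periodic-proof} and~\ref{tower-proof} the configuration is never periodic and never has a tower, hence at every step only the moves prescribed by the type~2 and type~3 rules occur; each such move sends the moving robot one node \emph{towards} some $d$.block (the biggest one in the type~3 rules and in the isolated-robot situation produced by a type~2 move, another $d$.block in the border-robot case of a type~2 move). Consequently the moving robot's distance to that target block strictly decreases but never becomes smaller than $d$ --- it equals $d$ exactly when the robot joins the block, because two distinct $d$.blocks are separated by a hole whose segment has distance at least $d+1$ --- while the robot's distances to all robots on its other side only increase. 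So no pair of robots ever reaches distance less than $d$. Conversely the biggest $d$.block never shrinks under the type~2 and type~3 rules and has size $\ge 2$, so some pair of robots is always at distance exactly $d$; hence the inter-distance is invariantly $d$, and a single $d$.block (equivalently, a type~1 configuration or a single $1$.block) does not appear while $s \le k-1$.

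Granting this invariant, Lemma~\ref{lem-time1} shows that the size of the biggest $d$.block increases by at least $1$ every $O(n)$ rounds. Starting from size $s_0 \ge 2$ this can occur at most $k - s_0 < k$ times before the biggest $d$.block reaches size $k$, i.e.\ contains all robots, which means the configuration has a single $d$.block. The total time is therefore $(k-s_0)\cdot O(n) = O(kn)$ rounds.

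I expect the main obstacle to be the verification of the inter-distance invariant: one must check that \emph{every} move allowed by the type~2 and type~3 rules --- including the ``second maximum view'' exception in type~2 and the transient isolated robots created along the way --- moves a robot strictly towards a $d$.block and never produces a pair at distance $< d$. This is a finite case analysis over the movement rules, with no new conceptual ingredient; once it is in place, the lemma is a direct induction on the size of the biggest $d$.block built on top of Lemma~\ref{lem-time1}.
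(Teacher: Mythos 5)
Your proposal takes essentially the same route as the paper: apply Lemma~\ref{lem-time1} repeatedly until the biggest $d$.block reaches size $k$, for a total of at most $k$ iterations of $O(n)$ rounds each. The inter-distance invariant you single out as the main obstacle is left entirely implicit in the paper's (three-sentence) proof, so your version is if anything more careful, but the decomposition and the key lemma are identical.
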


\begin{proof}
From Lemma \ref{lem-time1}, the size of the biggest $d$.block becomes larger in $O(n)$ rounds. Thus, the size of the biggest $d$.block becomes $k$ in $O(kn)$ rounds. Since the configuration that has a $d$.block with size $k$ is the one such that there is only single $d$.block. Therefore, the lemma holds.
\end{proof}

\begin{lemma}\label{lem-time3}
Let $C$ be a configuration such that there is only single $d$.block ($d\ge 2$). From configuration $C$, the configuration becomes one such that there is only single $(d-1)$.block in $O(kn)$ rounds.
\end{lemma}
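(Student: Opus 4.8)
The plan is to observe that a configuration $C$ with a single $d$.block and $d\ge 2$ is necessarily a configuration of type $1$, that the type-$1$ rule lowers the inter-distance to $d-1$ within a constant number of rounds, and that from that point Lemma~\ref{lem-time2} (read with its parameter equal to $d-1$) already finishes the job in $O(kn)$ rounds.

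First I would argue that $C$ is of type $1$: all $k$ robots lie on the single $d$.block, we have $d>1$, and since $k$ is odd there is exactly one robot $F$ on the axis of symmetry of the block. By the type-$1$ rule the only robots allowed to move are the two robots $A$ and $B$ adjacent to $F$ on the block, each heading one step toward $F$, while every other robot computes ``stay idle''. Hence, by fairness, within $O(1)$ rounds one of $A$, $B$ --- say $A$ --- performs its move and steps onto the empty node between itself and $F$; let $C'$ be the configuration obtained right after this first move. I would then check that the inter-distance of $C'$ equals $d-1$: in $C$ every distance between consecutive robots equals $d$ (and every other inter-robot distance is a multiple of $d$), and $A$'s move only decreases the distance between $A$ and $F$ from $d$ to $d-1$ while it increases the distance between $A$ and its other neighbour from $d$ to $d+1$. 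Thus no pair of robots in $C'$ is at distance smaller than $d-1$ and the pair $\{A,F\}$ is at distance exactly $d-1$, so $\{A,F\}$ is the unique biggest $(d-1)$.block while the remaining $k-2\ge 1$ robots (including $B$) are isolated.

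Next I would invoke Lemmas~\ref{periodic-proof} and~\ref{tower-proof} to note that $C'$, and every configuration reachable from it, is non-periodic and tower-free, so the algorithm keeps a well-defined non-empty set of movable robots and makes progress. Since $C'$ has a unique biggest $(d-1)$.block together with at least one isolated robot, $C'$ is a configuration of type $3$ --- exactly the situation covered by Lemma~\ref{lem-time2} when its ``$d$'' is read as $d-1$. That lemma then gives that, within $O(kn)$ rounds starting from $C'$, the configuration has a single $(d-1)$.block. Adding the $O(1)$ rounds needed to reach $C'$ from $C$, the total is $O(kn)$ rounds, as claimed.

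The step I expect to require the most care is the transition from $C$ to $C'$ under an adversarial asynchronous scheduler: I must make sure that activating only one robot of the symmetric pair $A$, $B$, or letting one of them act on an outdated view of $C$, cannot create a tower, produce a periodic configuration, or land in a configuration outside the hypotheses of Lemma~\ref{lem-time2}. The first two dangers are precisely what the type-$1$ cases of Lemmas~\ref{periodic-proof} and~\ref{tower-proof} rule out, so the only remaining work is the routine verification, sketched above, that the configuration emerging from a single such move is still ``one biggest $(d-1)$.block plus some isolated robots'', hence of type $3$ and thus within the scope of Lemma~\ref{lem-time2}.
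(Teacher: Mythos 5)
Your proposal is correct and follows essentially the same route as the paper's own proof: identify $C$ as a type-$1$ configuration, note that the type-$1$ rule produces a $(d-1)$.block in $O(1)$ rounds, and then invoke Lemma~\ref{lem-time2} (with inter-distance $d-1$) to obtain a single $(d-1)$.block in $O(kn)$ rounds. The extra verifications you sketch (that the new inter-distance is exactly $d-1$, that the remaining robots are isolated, and that Lemmas~\ref{periodic-proof} and~\ref{tower-proof} cover the asynchronous hazards) are details the paper leaves implicit, but they do not change the argument.
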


\begin{proof}
From the configuration of type 1, the configuration becomes one such that there is $(d-1)$.block in $O(1)$ rounds. After that, the configuration becomes one such that there is only single $(d-1)$.block in $O(kn)$ rounds by Lemma \ref{lem-time2}. Therefore, the lemma holds.
\end{proof}

\begin{lemma}\label{lem-time}
From any non-periodic initial configuration without tower, the configuration becomes one such that there is only single $1$.block in $O(n^2)$ rounds.
\end{lemma}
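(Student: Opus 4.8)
The plan is to prove Lemma~\ref{lem-time} by composing the time bounds established in Lemmas~\ref{lem-time1}, \ref{lem-time2}, and \ref{lem-time3}, after first arguing that the process is well-defined (i.e.\ that progress is always possible until the target configuration is reached). By Lemmas~\ref{periodic-proof} and \ref{tower-proof}, starting from any non-periodic initial configuration without a tower, every configuration reached along the execution is non-periodic and tower-free, so the notions of inter-distance $d$ and biggest $d$.block are meaningful throughout, and as noted right before Lemma~\ref{lem-time1} there are always one or two robots allowed to move until a single $1$.block is reached.

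First I would observe that the inter-distance $d$ of the initial configuration satisfies $d \le n/k$ or, more crudely, $d = O(n)$; in fact, since the process only ever \emph{decreases} the inter-distance (each transition from a single $d$.block to a single $(d-1)$.block), it suffices to bound $d$ by its initial value, which is at most $\lfloor n/k \rfloor$ because there are $k$ robots on a ring of $n$ nodes and the minimum pairwise distance is $d$. Next, by Lemma~\ref{lem-time2}, from the initial configuration with inter-distance $d_0$ we reach a configuration with a single $d_0$.block in $O(kn)$ rounds. Then, by Lemma~\ref{lem-time3}, each successive decrease of the inter-distance — from a single $d$.block to a single $(d-1)$.block — costs $O(kn)$ rounds, and we apply it $d_0 - 1$ times until we reach a single $1$.block. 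The total is therefore $O(kn) + (d_0 - 1)\cdot O(kn) = O(d_0 \, k n)$ rounds.

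Finally I would close the bound: since $d_0 \le n/k$, we have $d_0 \, k n = O(n^2)$, which gives the claimed $O(n^2)$ round complexity. I should also note that once a single $1$.block is reached the inter-distance has reached its minimum value $1$, so no further phase-one transitions occur, and the count terminates. The main subtlety — though it is minor — is making sure the chaining of Lemmas~\ref{lem-time2} and \ref{lem-time3} is legitimate: each invocation requires that the starting configuration be a single $d$.block (for Lemma~\ref{lem-time3}) or an arbitrary non-periodic tower-free configuration of inter-distance $d$ (for Lemma~\ref{lem-time2}), and both hypotheses are guaranteed by the conclusion of the preceding lemma together with the invariants from Lemmas~\ref{periodic-proof} and \ref{tower-proof}; one should also verify that $d_0 \geq 1$ always and that the edge case $d_0 = 1$ (initial configuration already has inter-distance one) is handled by Lemma~\ref{lem-time2} alone, still within $O(kn) = O(n^2)$ rounds.
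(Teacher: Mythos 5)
Your proposal is correct and follows essentially the same route as the paper: apply Lemma~\ref{lem-time2} once to form a single $d$.block, then apply Lemma~\ref{lem-time3} repeatedly to decrease the inter-distance, and close with $d \le n/k$ to get $O(dkn)=O(n^2)$. The extra care you take about chaining hypotheses and the $d_0=1$ edge case is fine but not needed beyond what the paper already states.
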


\begin{proof}
Let $d$ be the inter-distance of the initial configuration. From the initial configuration, the configuration becomes one such that there is a single $d$.block in $O(kn)$ rounds by Lemma \ref{lem-time2}. Since the inter-distance becomes smaller in $O(kn)$ rounds by Lemma \ref{lem-time3}, the configuration becomes one such that there is only single $1$.block in $O(dkn)$ rounds. Since $d\le n/k$ holds, the lemma holds.
\end{proof}

\subsection{Second phase: An algorithm to achieve the gathering}

In this section, we provide the algorithm for the second phase, that is, the algorithm to achieve the gathering from any configuration with a single $1$.block. As described in the beginning of this section, to separate the behavior from the one to construct a single $1$.block, we define a special configuration set $\CONF_{sp}$ that includes any configuration with a single $1$.block. Our algorithm guarantees that the system achieves the gathering from any configuration in $\CONF_{sp}$ without moving out of $\CONF_{sp}$. We combine two algorithms for the first phase and the second phase in the following way: Each robot executes the algorithm for the second phase if the current configuration is in $\CONF_{sp}$, and executes one for the first phase otherwise. By this way, as soon as the system becomes a configuration in $\CONF_{sp}$ during the first phase, the system moves to the second phase and the gathering is eventually achieved. Note that the system moves to the second phase without creating a single $1$.block if it reaches a configuration in $\CONF_{sp}$ before creating a single $1$.block.


The strategy of the second phase is as follows. 
When a configuration with a single 1.block is reached, the configuration becomes symmetric. Note that since there is an odd number of robots, we are sure that there is one robot $R1$ that is on the axis of symmetry. 
The two robots that are neighbor of $R1$ move towards $R1$. 
Thus $R1$ will have two neighboring holes of size $1$. 
The robots that are neighbor of such a hole not being on the axis of symmetry move towards the hole. 
By repeating this process, a new $1$.block is created (Note that its size has decreased and the tower is on the axis of symmetry). 
Consequently robots can repeat the behavior and achieve the gathering. 
Note that due to the asynchrony of the system, the configuration may contain a single $1$.block of size $2$. 
In this case, one of the two nodes of the block contains a tower (the other is occupied by a single robot). 
Since we assume a local weak multiplicity detection, only the robot that does not belong to a tower can move. 
Thus, the system can achieve the gathering. 

In the followings, we define the special configuration set $\CONF_{sp}$ and the behavior of robots in the configurations. To simplify the explanation, we define a block as a maximal consecutive nodes where every node is occupied by some robots. The size $Size(B)$ of a block $B$ denotes the number of nodes in the block. Then, we regard an isolated node as a block of size 1. 

The configuration set $\CONF_{sp}$ is partitioned into five subsets: Single block $\CONF_{sb}$, block leader $\CONF_{bl}$, semi-single block $\CONF_{ssb}$, semi-twin $\CONF_{st}$, semi-block leader $\CONF_{sbl}$. That is, $\CONF_{sp}=\CONF_{sb}\cup\CONF_{bl}\cup\CONF_{ssb}\cup\CONF_{st}\cup\CONF_{sbl}$ holds. We provide the definition of each set and the behavior of robots. Note that, although the definition of configurations specifies the position of a tower, each robot can recognize the configuration without detecting the position of a tower if the configuration is in $\CONF_{sp}$.
\begin{itemize*}
\item {\bf Single block.} A configuration $C$ is a single block configuration (denoted by $C\in\CONF_{sb}$) if and only if there exists exactly one block $B0$ such that $Size(B0)$ is odd or equal to $2$. Note that If $Size(B0)$ is equal to 2, one node of $B0$ is a tower and the other node is occupied by one robot. If $Size(B0)$ is odd, letting $v_t$ be the center node of $B0$, no node other than $v_t$ is a tower.


In this configuration, robots move as follows: 1) If $Size(B0)$ is equal to 2, the robot that is not on a tower moves to the neighboring tower. 2) If $Size(B0)$ is odd, the configuration is symmetric and hence there exists one robot on the axis of symmetry (Let this robot be $R1$). Then, the robots that are neighbors of $R1$ move towards $R1$.

\item {\bf Block leader.} A configuration $C$ is a block leader configuration (denoted by $C\in\CONF_{bl}$) if and only if the following conditions hold (see Figure \ref{BL}): 1) There exist exactly three blocks $B0$, $B1$, and $B2$ such that $Size(B0)$ is odd and $Size(B1)=Size(B2)$. 2) Blocks $B0$ and $B1$ share a hole of size 1 as their neighbors. 3) Blocks $B0$ and $B2$ share a hole of size 1 as their neighbors. 4) Letting $v_t$ be the center node in $B0$, no node other than $v_t$ is a tower. Note that, since $k<n-3$ implies that there exist at least four free nodes, robots can recognize $B0$, $B1$, and $B2$ exactly.

In this configuration, the robots in $B1$ and $B2$ that share a hole with $B_0$ as its neighbor move towards $B0$.

\item {\bf Semi-single block.} A configuration $C$ is a semi-single block configuration (denoted by $C\in\CONF_{ssb}$) if and only if the following conditions hold (see Figure \ref{SSB}): 1) There exist exactly two blocks $B1$ and $B2$ such that $Size(B2)=1$ and $Size(B1)$ is even (Note that this implies $Size(B1)+Size(B2)$ is odd.). 2) Blocks $B1$ and $B2$ share a hole of size 1 as their neighbors. 3) Letting $v_t$ be a node in $B1$ that is the $(Size(B1)/2)$-th node from the side sharing a hole with $B2$, no node other than $v_t$ is a tower.

In this configuration, the robot in $B2$ moves towards $B1$.

\item {\bf Semi-twin.} A configuration $C$ is a semi-twin configuration (denoted by $C\in\CONF_{st}$) if and only if the following conditions hold (see Figure \ref{ST}). 1) There exist exactly two blocks $B1$ and $B2$ such that $Size(B2)=Size(B1)+2$ (Note that this implies $Size(B1)+Size(B2)$ is even, which is distinguishable from semi-single block configurations). 2) Blocks $B1$ and $B2$ share a hole of size 1 as their neighbors. 3) Letting $v_t$ be a node in $B2$ that is the neighbor of a hole shared by $B1$ and $B2$, no node other than $v_t$ is a tower.

In this configuration, the robot in $B2$ that is a neighbor of $v_t$ moves towards $v_t$.

\item {\bf Semi-block leader.} A configuration $C$ is a semi-block leader configuration (denoted by $C\in\CONF_{sbl}$) if and only if the following conditions hold (see Figure \ref{SBL}). 1) There exist exactly three blocks $B0$, $B1$, and $B2$ such that $Size(B0)$ is even and $Size(B2)=Size(B1)+1$. 2) Blocks $B0$ and $B1$ share a hole of size 1 as their neighbors. 3) Blocks $B0$ and $B2$ share a hole of size 1 as their neighbors. 4) Letting $v_t$ be a node in $B0$ that is the $(Size(B0)/2)$-th node from the side sharing a hole with $B2$, no node other than $v_t$ is a tower. Note that, since $k<n-3$ implies that there exist at least four free nodes, robots can recognize $B0$, $B1$, and $B2$ exactly.

In this configuration, the robot in $B2$ that shares a hole with $B0$ as a neighbor moves towards $B0$.
\end{itemize*}

\begin{figure}[t!]
 \begin{minipage}[b]{.3\linewidth}
  \centering\epsfig{figure=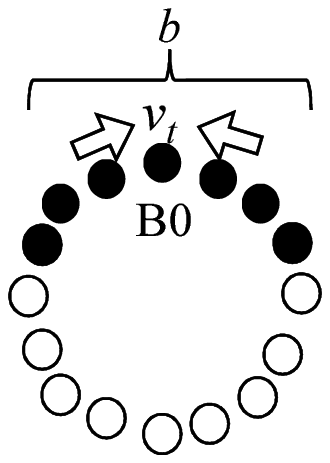,scale=0.80}
  \caption{Single block $\CONF_{sb}(b)$ ($b>2$)\label{SB}}
 \end{minipage} \hfill
 \begin{minipage}[b]{.3\linewidth}
  \centering\epsfig{figure=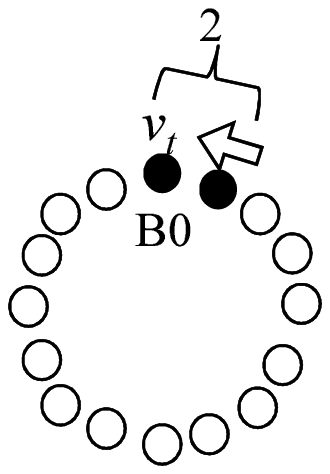,scale=0.80}
  \caption{Single block $\CONF_{sb}(2)$\label{SB2}}
 \end{minipage} \hfill
 \begin{minipage}[b]{.3\linewidth}
  \centering\epsfig{figure=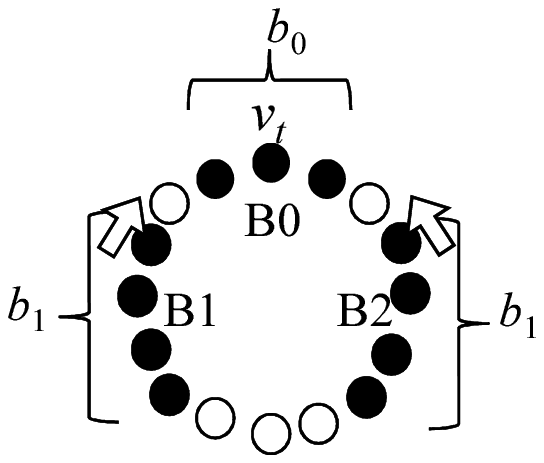,scale=0.80}
  \caption{Block leader $\CONF_{bl}(b_0,b_1)$\label{BL}}
 \end{minipage} \hfill
 \begin{minipage}[b]{.3\linewidth}
  \centering\epsfig{figure=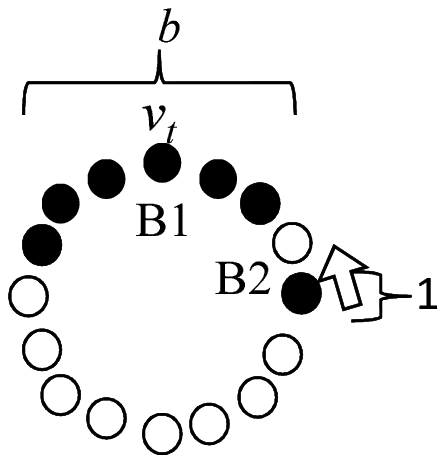,scale=0.80}
  \caption{Semi-single block $\CONF_{ssb}(b)$\label{SSB}}
 \end{minipage} \hfill
 \begin{minipage}[b]{.3\linewidth}
  \centering\epsfig{figure=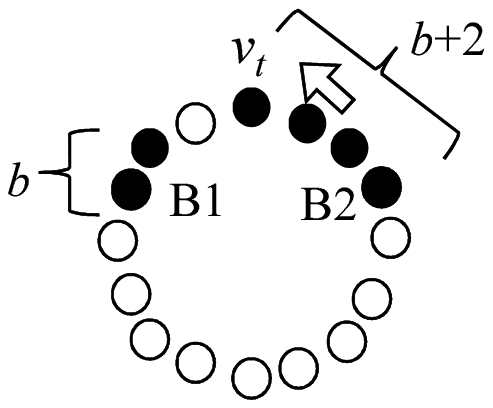,scale=0.80}
  \caption{Semi-twin $\CONF_{st}(b)$\label{ST}}
 \end{minipage} \hfill
 \begin{minipage}[b]{.3\linewidth}
  \centering\epsfig{figure=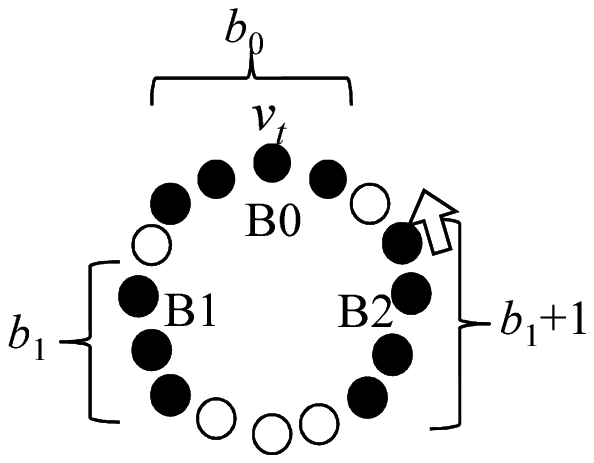,scale=0.80}
  \caption{Semi-block leader $\CONF_{sbl}(b_0,b_1)$\label{SBL}}
 \end{minipage} \hfill
\end{figure}

\paragraph{Correctness of the algorithm}
In the followings, we prove the correctness of our algorithm. To prove the correctness, we define the following notations.
\begin{itemize*}
\item $\CONF_{sb}(b)$: A set of single block configurations such that $Size(B0)=b$.
\item $\CONF_{bl}(b_0,b_1)$: a set of block leader configurations such that $Size(B0)=b_0$ and $Size(B1)=Size(B2)=b_1$.
\item $\CONF_{ssb}(b)$: A set of semi-single block configurations such that $Size(B1)=b$.
\item $\CONF_{st}(b)$: A set of semi-twin configurations such that $Size(B1)=b$.
\item $\CONF_{sbl}(b_0,b_1)$: A set of semi-block leader configurations such that $Size(B0)=b_0$ and $Size(B1)=b_1$.
\end{itemize*}
Note that every configuration in $\CONF_{sp}$ has at most one node that can be a tower (denoted by $v_t$ in the definition). We denote such a node by a tower-construction node. In addition, we define an outdated robot as the robot that observes the outdated configuration and tries to move based on the outdated configuration. 

From Lemmas \ref{lem:sbtobl} to \ref{lem:sptogat}, we show that, from any configuration $C\in\CONF_{sp}$ with no outdated robots, the system achieves the gathering. However, some robots may move based on the configuration in the first phase. That is, some robots may observe the configuration in the first phase and try to move, however the configuration reaches one in the second phase before they move. Thus, in Lemma \ref{twoalg}, we show that the system also achieves the gathering from such configurations with outdated robots.

\begin{lemma}
\label{lem:sbtobl}
From any single block configuration $C\in\CONF_{sb}(b)$ ($b\geq 5$) with no outdated robots, the system reaches a configuration $C'\in\CONF_{bl}(1,(b-3)/2)$ with no outdated robots in $O(1)$ rounds.
\end{lemma}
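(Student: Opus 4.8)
The plan is to identify the (at most two) robots that the single-block rule authorizes to move in $C$, and to verify by a short case analysis over the possible asynchronous interleavings of their moves that every execution reaches $\CONF_{bl}(1,(b-3)/2)$ within a constant number of rounds, ending with no robot holding a stale view.

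First I would set up notation for $C$. Since $C\in\CONF_{sb}(b)$ with $b\ge 5$ odd, $C$ is the single block $B0$ on $b$ consecutive occupied nodes $w_1,\dots,w_b$; it is symmetric about the center node $v_t=w_{(b+1)/2}$, which carries the only possible tower and the on-axis robot $R1$. The single-block rule for the case $Size(B0)$ odd authorizes exactly the two neighbors of $R1$, namely $A$ at $w_{(b-1)/2}$ and $B$ at $w_{(b+3)/2}$, each to move onto $v_t$; every other robot is told to stay. A key observation, which will later give the ``no outdated robots'' clause, is that no robot other than $A$ or $B$ ever receives an authorized move before both of them have reached $v_t$, and that an authorized move of $A$ or $B$ always targets $v_t$; hence once $A$ and $B$ have committed their moves, no robot is left with an obsolete view that it still intends to act upon.

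The heart of the argument is the interleaving analysis. If $A$ and $B$ both move from the view of $C$ (simultaneously or in either order), then $v_t$ accumulates two extra robots (becoming the unique tower), the nodes $w_{(b-1)/2}$ and $w_{(b+3)/2}$ become holes of size $1$, and the two arms $\{w_1,\dots,w_{(b-3)/2}\}$ and $\{w_{(b+5)/2},\dots,w_b\}$ become blocks of size $(b-3)/2\ge 1$; with the singleton tower-block $\{v_t\}$ in the middle this is exactly $\CONF_{bl}(1,(b-3)/2)$ (there are exactly three blocks because $b\le k<n-3$ leaves the opposite hole of size $n-b\ge 1$, and $v_t$, being its whole block, is trivially its center). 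If instead only one of them moves first, say $A$, the resulting configuration $C_1$ has exactly two blocks: the block of size $(b+1)/2$ made of the untouched $B$-arm together with $v_t$ (which now holds the unique tower and sits at the block-end adjacent to the new size-$1$ hole at $w_{(b-1)/2}$), and the block of size $(b-3)/2$. Since $(b+1)/2=(b-3)/2+2$, the two blocks share a size-$1$ hole, and the tower is on the block-end neighboring that hole, $C_1$ is recognized as a configuration of $\CONF_{st}((b-3)/2)$ --- unambiguously, since it has only two blocks and $Size(B1)+Size(B2)=b-1$ is even, which excludes $\CONF_{ssb}$. The semi-twin rule then authorizes exactly the $B2$-robot adjacent to $v_t$, which is precisely $B$, to move onto $v_t$; and should $B$ instead act from its (now outdated) view of $C$, its target is also the still-adjacent $v_t$. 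Either way $B$ lands on $v_t$ next and we arrive at $\CONF_{bl}(1,(b-3)/2)$ as in the previous case.

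Finally, by fairness each of $A$ and $B$ executes a move phase within every round, so both have reached $v_t$ after at most two rounds, giving the $O(1)$ bound; and by the observation in the second paragraph the configuration then has no outdated robots. The step I expect to be the main obstacle is the middle case $C_1$: one must check that a ``half-finished'' block of odd size splits off a genuine semi-twin configuration inside $\CONF_{sp}$, that the robots classify it as such (including the parity check distinguishing it from $\CONF_{ssb}$), and that the behavior prescribed there is consistent with --- and completes --- the intended symmetric motion, rather than ejecting the system from $\CONF_{sp}$ or aiming it at a different target block.
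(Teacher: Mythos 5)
Your proof is correct and follows essentially the same route as the paper's: split on whether the scheduler moves the two neighbors of the tower-construction node simultaneously (going directly to $\CONF_{bl}(1,(b-3)/2)$) or one at a time (passing through the intermediate configuration $\CONF_{st}((b-3)/2)$, from which only the remaining symmetric robot is authorized to move, and its target is still $v_t$). The extra care you take in verifying that the half-finished configuration really satisfies the semi-twin definition and is parity-distinguishable from $\CONF_{ssb}$ is a welcome elaboration of what the paper asserts without detail, but it is the same argument.
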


\begin{proof}
In $C$ the robots that are neighbors of the tower-construction node can move. Two sub cases are possible. First, the scheduler makes the two robots move at the same time. Once the robots move, they join the tower-construction node. Then, the configuration becomes one in $\CONF_{bl}(1,(b-3)/2)$ and there exist no outdated robots.
In the second case, the scheduler activates the two robots separately. In this case, one of the two robots first joins the tower-construction node and the configuration becomes one in $\CONF_{st}((b-3)/2)$. After that, the other robot joins the tower-construction node (No other robots can move in this configuration). Then, the configuration becomes one in $\CONF_{bl}(1,(b-3)/2)$ and there exist no outdated robots.
In both cases, the transition requires at most $O(1)$ rounds and thus the lemma holds.
\end{proof}

\begin{lemma}
\label{lem:sbtoga}
From any single block configuration $C\in\CONF_{sb}(3)$ with no outdated robots, the system achieves the gathering in $O(1)$ rounds.
\end{lemma}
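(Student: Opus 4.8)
The plan is to argue by a short case analysis on how the scheduler activates the robots of $B0$, showing that every execution produces a single tower after a constant number of move phases.

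First I would pin down the structure of $C\in\CONF_{sb}(3)$: the block $B0$ is three consecutive occupied nodes, the two outer ones carrying exactly one robot each and the center node being the tower-construction node $v_t$, which holds $k-2\ge 1$ robots. By the rule for $\CONF_{sb}$ with $Size(B0)$ odd, the only robots allowed to move are the two outer robots of $B0$, and each is instructed to step onto $v_t$ (its neighbor towards $R1$); the robots sitting on $v_t$ never move. Consequently every configuration reachable from $C$ differs from $C$ only in how many of the two outer robots have already merged into $v_t$ (zero, one, or two).

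Then I would split into cases on the scheduler. If it activates both outer robots on the basis of the view in $C$, then once both have completed their move phase all $k$ robots occupy $v_t$ and the gathering is done; the only subtlety is the interleaving in which one outer robot moves strictly before the other, where the intermediate configuration is exactly a member of $\CONF_{sb}(2)$ (the center is now a $(k-1)$-tower and the surviving outer node carries a single robot), and the pending outer robot, acting on its outdated $C$-view, still has $v_t$ as destination, so it merges correctly. If the scheduler instead activates only one outer robot, the same $\CONF_{sb}(2)$ configuration is reached, and there the unique non-tower robot is instructed to move onto the tower, which is $v_t$; after that single move the gathering is achieved. If a second outer robot has meanwhile taken its snapshot in $C$ and become outdated, it is handled exactly as before, because its $C$-instruction and the correct action in $\CONF_{sb}(2)$ coincide. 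In every case at most two move phases of the outer robots are needed, so the gathering is reached in $O(1)$ asynchronous rounds.

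The main obstacle is not the reachability itself but the asynchrony bookkeeping: I must make sure that an outer robot that took its snapshot in $C$ but moves only after the configuration has already become one in $\CONF_{sb}(2)$ still performs a move that is legal and progress-making in that actual configuration. This goes through precisely because the target ``step onto $v_t$'' is invariant across $\CONF_{sb}(3)$ and the $\CONF_{sb}(2)$ it induces, and because in $\CONF_{sb}(2)$ no other robot (in particular, no robot of the tower) is ever permitted to move, so no competing or face-to-face movement — and hence no spurious extra tower or escape from $\CONF_{sp}$ — can occur.
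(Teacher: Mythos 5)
Your proposal is correct and follows essentially the same route as the paper's proof: a case split on whether the scheduler moves the two outer robots simultaneously or separately, with the separate case passing through $\CONF_{sb}(2)$ where local multiplicity detection keeps the tower robots idle and the remaining robot (outdated or not) still targets $v_t$. The extra bookkeeping you add about the pending robot's outdated view coinciding with the correct action in $\CONF_{sb}(2)$ is a slightly more explicit rendering of what the paper handles implicitly, but it is the same argument.
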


\begin{proof}
In $C$ the robots that are neighbors of the tower-construction node can move. Two sub cases are possible.
First, the scheduler makes the two robots move at the same time. Once the robots move, they join the tower-construction node. Then, the system achieves the gathering.
In the second case, the scheduler activates the two robots separately. In this case, one of the two robots first joins the tower-construction node and the configuration becomes one in $\CONF_{sb}(2)$. After that, the other robot joins the tower-construction node because robots on the tower never move due to the local multiplicity detection. And thus, the system achieves the gathering.
In both cases, the transition requires at most $O(1)$ rounds and thus the lemma holds.
\end{proof}

\begin{lemma}
\label{lem:bltobl}
From any block leader configuration $C\in\CONF_{bl}(b_0,b_1)$ ($b_1\geq 2$) with no outdated robots, the system reaches a configuration $C'\in\CONF_{bl}(b_0+2,b_1-1)$ with no outdated robots in $O(1)$ rounds.
\end{lemma}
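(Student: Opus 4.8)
The plan is to follow the two ``active'' robots and show that, whatever the scheduler does, exactly these two moves occur and the resulting configuration is $\CONF_{bl}(b_0+2,b_1-1)$. First I would recall the behavior in $\CONF_{bl}(b_0,b_1)$: the only robots allowed to move are the border robot $r_1$ of $B1$ adjacent to the size-$1$ hole separating $B1$ from $B0$, and the symmetric border robot $r_2$ of $B2$; each is told to move one step towards $B0$, i.e.\ into the adjacent empty node. Since $b_1\ge 2$, after such a step the robot fills that hole and becomes the new extremal node of (a grown) $B0$, while a fresh size-$1$ hole appears between it and the $b_1-1$ robots remaining in its former block. I would also note that by hypothesis $C$ has no outdated robots and that every robot other than $r_1,r_2$ computes ``stay'' in $C$.

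Then I would split on the interleaving. If the scheduler moves $r_1$ and $r_2$ in the same step (possibly after both have taken their snapshot of $C$), both holes adjacent to $B0$ are filled at once: $B0$ grows by one node on each side, so its size becomes $b_0+2$, its center is unchanged and is still the only possible tower, and $B1,B2$ each shrink to size $b_1-1$ with size-$1$ holes still separating them from $B0$. Hence the configuration reached is in $\CONF_{bl}(b_0+2,b_1-1)$, and since no robot had a pending move other than $r_1,r_2$, it has no outdated robots.

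The delicate case is when $r_1$ and $r_2$ are activated separately; say $r_1$ moves first, reaching an intermediate configuration $C''$. Here I would verify that $C''$ is a semi-block leader configuration in $\CONF_{sbl}(b_0+1,b_1-1)$: the block $B0$ has grown to the even size $b_0+1$ on the $B1$-side only; the remnant of $B1$ has size $b_1-1$ and the (untouched) $B2$ still has size $b_1=(b_1-1)+1$, with size-$1$ holes separating both from the grown $B0$; and the only potential tower is still the original center $v_t$ of $B0$, which is precisely the $(\,(b_0+1)/2\,)$-th node of the new $B0$ counted from the $B2$-side, as the definition of $\CONF_{sbl}$ requires. In $C''$ the algorithm lets only the border robot of $B2$ adjacent to the $B0$-hole --- that is, $r_2$ --- move, and it moves towards $B0$; every other robot (including any robot freshly activated in $C''$) computes ``stay''. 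Crucially, this is exactly the move $r_2$ would also make from its possibly stale snapshot of $C$, so whether $r_2$ acts on an up-to-date or an outdated view it performs the same single step into the $B0$-hole. Once it does, $B0$ reaches size $b_0+2$ with center still $v_t$, and $B1,B2$ have size $b_1-1$; thus the system is in $\CONF_{bl}(b_0+2,b_1-1)$, with no robot having any pending move, hence no outdated robots.

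Finally, for the round complexity I would observe that in every case at most two move phases are involved, and by fairness each of $r_1,r_2$ performs its move within one asynchronous round, so the transition takes $O(1)$ rounds. The main obstacle is the bookkeeping in the asynchronous case: correctly classifying the intermediate configuration as $\CONF_{sbl}(b_0+1,b_1-1)$ (in particular, checking that the tower-construction node is preserved and that no other configuration class of $\CONF_{sp}$ matches it), and arguing that an outdated snapshot of $C$ cannot make the second border robot deviate from the single step towards $B0$ that the intermediate configuration also prescribes.
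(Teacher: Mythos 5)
Your proof is correct and follows essentially the same route as the paper's: split on whether the scheduler activates the two border robots simultaneously or separately, and in the latter case identify the intermediate configuration as $\CONF_{sbl}(b_0+1,b_1-1)$ from which only the second border robot can move (towards $B0$, the same step it would take from a stale snapshot of $C$). Your version merely spells out in more detail the verification that the intermediate configuration satisfies the semi-block-leader definition (parity of the grown $B0$, position of the tower-construction node), which the paper leaves implicit.
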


\begin{proof}
In $C$ the robots in $B1$ and $B2$ that share a hole of size 1 with $B0$ can move. Two sub cases are possible.
First, the scheduler makes the two robots move at the same time. Once the robots move, they join $B0$. Since the size of $B0$ is increased by two and the size of $B1$ and $B2$ is decreased by one, the configuration becomes one in $\CONF_{bl}(b_0+2,b_1-1)$. In addition, there exist no outdated robots.
The second possibility is that the scheduler activates the two robots separately. In this case, one of the two robots first joins $B0$. Then, since the size of $B0$ is increased by one and the size of either $B1$ or $B2$ is decreased by one, the configuration becomes one in $\CONF_{sbl}(b_0+1,b_1-1)$. After that, the other robot joins $B0$ (No other robots can move in this configuration). Then, the configuration becomes one in $\CONF_{bl}(b_0+2,b_1-1)$ and there exist no outdated robots.
In both cases, the transition requires at most $O(1)$ rounds and thus the lemma holds.
\end{proof}

\begin{lemma}
\label{lem:bltosb}
From any block leader configuration $C\in\CONF_{bl}(b_0,1)$ with no outdated robots, the system reaches a configuration $C'\in\CONF_{sb}(b_0+2)$ with no outdated robots in $O(1)$ rounds.
\end{lemma}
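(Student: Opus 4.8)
The plan is to mirror the argument of Lemma~\ref{lem:bltobl}, performing a two-way case analysis on whether the scheduler activates the two mobile robots simultaneously or one after the other. In $C\in\CONF_{bl}(b_0,1)$ the blocks $B1$ and $B2$ each consist of a single robot, and by the rule for $\CONF_{bl}$ these are the only robots allowed to move; each is instructed to cross the hole of size $1$ separating it from $B0$, i.e.\ to step onto the node of $B0$'s incident hole. Note also that $b_0$ is odd, so $b_0+2$ is odd, as required for membership in $\CONF_{sb}(b_0+2)$.

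In the first case the two robots move at the same time: both holes of size $1$ adjacent to $B0$ get filled, so $B0$ becomes a single block of size $b_0+2$ and no other block remains. The only tower node is the former center $v_t$ of $B0$, which lay at distance $(b_0-1)/2$ from each end of the old $B0$, hence at distance $(b_0+1)/2$ from each end of the new size-$(b_0+2)$ block, i.e.\ it is again the center; thus the configuration is in $\CONF_{sb}(b_0+2)$, and since all robots now observe this configuration there are no outdated robots. In the second case the robots are activated separately, say the robot of $B1$ first: it joins $B0$ on the $B1$ side, so $B0$ becomes even of size $b_0+1$ while still sharing a hole of size $1$ with $B2$ (of size $1$) on the other side. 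One checks that the former center $v_t$ of $B0$ is precisely the $((b_0+1)/2)$-th node of the enlarged $B0$ counted from the side sharing the hole with $B2$, so this is a configuration in $\CONF_{ssb}(b_0+1)$ with the prescribed tower-construction node. In $\CONF_{ssb}(b_0+1)$ the only mobile robot is the one in $B2$, and its prescribed move (towards $B1=B0$) coincides with the move it had already decided upon in $C$; hence that robot is not a harmful outdated robot, and once it moves $B0$ becomes the single block of odd size $b_0+2$ with $v_t$ at its center, i.e.\ a configuration in $\CONF_{sb}(b_0+2)$ with no outdated robots. Both cases clearly take $O(1)$ rounds.

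The step I expect to require the most care is the bookkeeping about the tower-construction node in the second case: one must verify that the notion ``center of $B0$'' used in the definitions of $\CONF_{sb}$ and $\CONF_{bl}$ and the notion ``$(\mathit{Size}/2)$-th node from the $B2$-side'' used in the definition of $\CONF_{ssb}$ designate the same physical node after one robot attaches to an end of $B0$, and that attaching a second robot to the opposite end restores this node to the exact center of the enlarged block; together with the subtle but routine point that the second-moving robot of the separate-activation case, although it computed in the outdated configuration $C$, behaves correctly because its decision agrees with the one dictated by the intermediate $\CONF_{ssb}$ configuration.
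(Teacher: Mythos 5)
Your proposal is correct and follows essentially the same route as the paper's proof: the same two-case split on simultaneous versus separate activation, with the intermediate configuration in $\CONF_{ssb}(b_0+1)$ in the separate case. The extra bookkeeping you supply (parity of $b_0+2$, the location of the tower-construction node after one attachment, and why the second-moving robot's outdated decision is harmless) is detail the paper leaves implicit, but it does not change the argument.
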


\begin{proof}
In $C$ the robots in $B1$ and $B2$ can move. Two sub cases are possible: \emph{(i)} The scheduler makes the two robots move at the same time, then once the robots move, they join $B0$. Then, the system reaches a configuration in $\CONF_{sb}(b_0+2)$ with no outdated robots. \emph{(ii)} The scheduler activates the two robots separately. In this case, one of the two robots first joins $B0$. Then, the configuration becomes one in $\CONF_{ssb}(b_0+1)$. After that, the other robot joins $B0$, and thus the configuration becomes one in $\CONF_{sb}(b_0+2)$ with no outdated robots.
In both cases, the transition requires at most $O(1)$ rounds and thus the lemma holds.
\end{proof}

\begin{lemma}
\label{lem:sbtosb}
From any single block configuration $C\in\CONF_{sb}(b)$ ($b\geq 5$) with no outdated robots, the system reaches a configuration $C'\in\CONF_{sb}(b-2)$ with no outdated robots in $O(k)$ rounds.
\end{lemma}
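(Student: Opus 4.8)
The plan is to obtain the transition $\CONF_{sb}(b)\rightsquigarrow\CONF_{sb}(b-2)$ by \emph{chaining} the three preceding lemmas, each of which costs $O(1)$ rounds and, crucially, each of which outputs a configuration ``with no outdated robots'' — which is exactly the hypothesis the next lemma in the chain requires, so the composition is legitimate despite asynchrony. Concretely, I would first apply Lemma \ref{lem:sbtobl} to the starting configuration $C\in\CONF_{sb}(b)$ with $b\geq 5$ (note $b$ is odd, so $(b-3)/2$ is a positive integer), reaching in $O(1)$ rounds a configuration in $\CONF_{bl}(1,(b-3)/2)$ with no outdated robots.

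Next I would repeatedly apply Lemma \ref{lem:bltobl}, which sends $\CONF_{bl}(b_0,b_1)$ with $b_1\geq 2$ to $\CONF_{bl}(b_0+2,b_1-1)$ in $O(1)$ rounds while preserving the no-outdated-robots property. A short induction on $j$ shows that after $j$ such applications the system is in $\CONF_{bl}(1+2j,(b-3)/2-j)$; hence after exactly $j=(b-5)/2$ applications it is in $\CONF_{bl}(b-4,1)$. Finally I would invoke Lemma \ref{lem:bltosb} on $\CONF_{bl}(b-4,1)$, which yields in $O(1)$ rounds a configuration in $\CONF_{sb}((b-4)+2)=\CONF_{sb}(b-2)$ with no outdated robots, as claimed. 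The degenerate case $b=5$ is consistent with this scheme: then $(b-5)/2=0$, so no application of Lemma \ref{lem:bltobl} is needed and $\CONF_{bl}(1,1)$ passes straight to $\CONF_{sb}(3)=\CONF_{sb}(b-2)$ via Lemma \ref{lem:bltosb}.

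For the round complexity, the total number of lemma applications is $1+(b-5)/2+1=(b-1)/2$. Since $B0$ occupies $b$ distinct nodes and all $k$ robots lie in the blocks of the configuration, $b\leq k$, so this is $O(k)$ applications; each costing $O(1)$ rounds gives $O(k)$ rounds overall.

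I do not expect a genuine obstacle here: the only care needed is the arithmetic bookkeeping that the block-size parameters match up exactly at every step (in particular that $(b-5)/2$ is a non-negative integer, guaranteed by $b\geq 5$ odd) and the explicit observation that the ``no outdated robots'' conclusion of each lemma is precisely the precondition of the following one, so that the chain — and not merely each individual step — is valid. The repeated use of Lemma \ref{lem:bltobl} should be written as the aforementioned induction on $j$ to make this fully rigorous.
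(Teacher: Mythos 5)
Your proposal is correct and follows exactly the paper's own route: apply Lemma \ref{lem:sbtobl} once to reach $\CONF_{bl}(1,(b-3)/2)$, iterate Lemma \ref{lem:bltobl} until the outer blocks shrink to size $1$, and finish with Lemma \ref{lem:bltosb}, for $O(b)\subseteq O(k)$ rounds in total. The paper states this in three sentences; your version merely makes the induction and the arithmetic $(b-5)/2$ steps explicit, which is a faithful elaboration rather than a different argument.
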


\begin{proof}
From Lemma \ref{lem:sbtobl}, the configuration becomes one in $\CONF_{bl}(1,(b-3)/2)$ in $O(1)$ rounds. After that, from Lemmas \ref{lem:bltobl} and \ref{lem:bltosb}, the configuration becomes $\CONF_{sb}(b-2)$ in at most $O((b-3)/2)$ rounds. Since $b\leq k$, the lemma holds.
\end{proof}

\begin{lemma}
\label{lem:sbtogat}
From any single block configuration $C\in\CONF_{sb}(b)$ with no outdated robots, the system achieves the gathering in $O(k^2)$ rounds.
\end{lemma}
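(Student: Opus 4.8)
The plan is to prove the statement by induction on the block size $b$, relying on the two-step reduction already established and on the fact that the single block reached at the end of the first phase has odd size (equal to $k$).

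First I would settle the base cases. If $b=2$, then one node of $B0$ is a tower and the other carries a single robot; by the behaviour prescribed for $\CONF_{sb}$, that lone robot moves onto the tower, achieving gathering in $O(1)$ rounds (robots on a tower never move, by local multiplicity detection, so no outdated robot can spoil this). If $b=3$, Lemma \ref{lem:sbtoga} directly gives gathering in $O(1)$ rounds.

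For the inductive step, take $b\ge 5$ with $b$ odd. By Lemma \ref{lem:sbtosb} the system reaches a configuration $C'\in\CONF_{sb}(b-2)$ with no outdated robots in $O(k)$ rounds, and $b-2$ is again odd; applying the induction hypothesis to $C'$ yields gathering. Thus the process is the chain $\CONF_{sb}(b)\to\CONF_{sb}(b-2)\to\cdots\to\CONF_{sb}(3)$, which uses at most $(b-3)/2$ reductions. Since every single block configuration encountered here satisfies $b\le k$, this is $O(k)$ reductions, each costing $O(k)$ rounds by Lemma \ref{lem:sbtosb}, plus the $O(1)$ terminal step treated above. Summing over the chain gives $O(k^2)$ rounds in total.

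The point requiring care—rather than a genuine obstacle—is that at every stage the hypotheses of the invoked lemmas are met: the configuration stays in $\CONF_{sp}$, it remains a single block configuration of odd size (so that Lemma \ref{lem:sbtosb} or Lemma \ref{lem:sbtoga} applies), and no outdated robots are present. All three are guaranteed by the conclusions of Lemmas \ref{lem:sbtobl}, \ref{lem:bltobl} and \ref{lem:bltosb}, hence of Lemma \ref{lem:sbtosb}, together with the base-case analysis. In particular, the parity of the block size is preserved by the reduction, which is exactly why, starting from $b=k$ odd, we never encounter an even single block of size greater than $2$, and the induction closes cleanly.
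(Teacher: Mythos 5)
Your proof is correct and follows essentially the same route as the paper's: iterate Lemma~\ref{lem:sbtosb} to shrink the block by two every $O(k)$ rounds until size $3$, then finish with Lemma~\ref{lem:sbtoga}, for $O(k^2)$ rounds total. Your version is simply more explicit — casting the chain as an induction, checking parity preservation, and covering the $b=2$ case — where the paper states the same argument in two sentences.
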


\begin{proof}
From Lemma~\ref{lem:sbtosb}, if $b\geq 5$, the size of the block is decreased by two in $O(k)$ rounds. From Lemma~\ref{lem:sbtoga}, if the size of the block is 3, the system achieves the gathering in $O(1)$ rounds. 
\end{proof}

Lemma~\ref{lem:sbtogat} says that the system achieves the gathering in $O(k^2)$ rounds from any single block configuration in $\CONF_{sb}$. For other configurations, we can show the following lemmas.

\begin{lemma}
\label{lem:bltogat}
From any block leader configuration $C \in \CONF_{bl}$ with no outdated robots, the system achieves the gathering in $O(k^2)$ rounds.
\end{lemma}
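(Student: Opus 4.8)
The plan is to chain the three preceding lemmas. Lemma~\ref{lem:bltobl} lets us ``pump'' a pair of robots from the twin blocks $B1,B2$ into the central block $B0$, turning $\CONF_{bl}(b_0,b_1)$ into $\CONF_{bl}(b_0+2,b_1-1)$; Lemma~\ref{lem:bltosb} handles the last pair (when $b_1=1$) and drops us into a single-block configuration; and Lemma~\ref{lem:sbtogat} then finishes the gathering. So the argument is an induction on $b_1$ (the common size of $B1$ and $B2$) layered on top of the already-established single-block gathering.

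Concretely, suppose $C\in\CONF_{bl}(b_0,b_1)$ with no outdated robots. If $b_1\ge 2$, Lemma~\ref{lem:bltobl} gives, in $O(1)$ rounds, a configuration in $\CONF_{bl}(b_0+2,b_1-1)$ again with no outdated robots; iterating this $b_1-1$ times reaches a configuration in $\CONF_{bl}(b_0+2(b_1-1),1)$ with no outdated robots in $O(b_1)$ rounds. Then Lemma~\ref{lem:bltosb} yields, in $O(1)$ further rounds, a configuration in $\CONF_{sb}(b_0+2b_1)$ with no outdated robots, from which Lemma~\ref{lem:sbtogat} achieves the gathering in $O(k^2)$ rounds. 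The only quantity to bound is $b_1$: the $b_0+2b_1$ nodes of $B0\cup B1\cup B2$ are all occupied and there are only $k$ robots, so $b_0+2b_1\le k$, hence $b_1=O(k)$. Summing the three phases gives $O(k)+O(1)+O(k^2)=O(k^2)$ rounds.

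The routine points to check are that each intermediate configuration genuinely satisfies the hypothesis of the lemma applied next — in particular that the block-leader shape (three blocks pairwise sharing size-$1$ holes, with a tower only at the center of $B0$) is preserved and that $b_1-1\ge 1$ at every step before the final call to Lemma~\ref{lem:bltosb} — and that the ``no outdated robots'' property is maintained along the chain; both are part of the conclusions of the cited lemmas, so no new verification is needed. There is no real obstacle here: the content of this lemma is just the bookkeeping that the earlier $O(1)$-round transitions compose into an $O(k^2)$ bound, with the number of iterations controlled by $b_1\le k/2$. (Configurations reached during the first phase that still carry outdated robots are not covered by this lemma; they are handled separately in Lemma~\ref{twoalg}.)
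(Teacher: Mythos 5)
Your proof is correct and follows the paper's argument exactly: iterate Lemma~\ref{lem:bltobl} until $b_1=1$, apply Lemma~\ref{lem:bltosb} to reach $\CONF_{sb}(b_0+2b_1)$ in $O(k)$ rounds total (since $b_1\le k$), then conclude with Lemma~\ref{lem:sbtogat}. You merely spell out the bookkeeping that the paper compresses into two sentences.
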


\begin{proof}
Consider configuration $C\in\CONF_{bl}(b_0,b_1)$. From Lemmas~\ref{lem:bltobl} and ~\ref{lem:bltosb}, the configuration becomes $\CONF_{sb}(b_0+2b_1)$ in $O(k)$ rounds since $b_1\le k$ holds. After that, from Lemma~\ref{lem:sbtogat}, the system achieves the gathering in $O(k^2)$ rounds.
\end{proof}

\begin{lemma}
\label{lem:ssbtogat}
From any semi-single block configuration $C\in\CONF_{ssb}$ with no outdated robots, the system achieves the gathering in $O(k^2)$ rounds.
\end{lemma}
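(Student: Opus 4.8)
The plan is to reduce a semi-single block configuration to a single block configuration, and then invoke Lemma~\ref{lem:sbtogat}. Recall that in $C \in \CONF_{ssb}(b)$ there are exactly two blocks $B1$ and $B2$ with $Size(B2)=1$, $Size(B1)=b$ even, and $B1$ and $B2$ share a hole of size $1$; the tower-construction node $v_t$ is the $(b/2)$-th node of $B1$ counted from the side adjacent to $B2$. The unique robot allowed to move is the one in $B2$, and its destination is the neighboring node towards $B1$. Since only one robot is designated to move, there is no possibility of two robots moving face-to-face, and hence no outdated robot can be created by this transition.

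First I would track the single move: when the robot in $B2$ moves into the empty node shared with $B1$, it becomes adjacent to (the near end of) $B1$, so the two blocks merge into a single block of size $b+1$. Because $b$ is even, $b+1$ is odd. I then need to check that the merged block is exactly a single block configuration in $\CONF_{sb}(b+1)$, i.e. that the tower sits at the center of the new block. The new block of size $b+1$ has the original $B1$ occupying one end and the moved robot occupying the other end; the center node of the new block is the $((b+1+1)/2) = (b/2+1)$-th node counting from the moved-robot end, which is the $(b/2)$-th node counting from the original far end of $B1$, and hence the $(b/2)$-th node of $B1$ counting from the side adjacent to the moved robot --- precisely the definition of $v_t$. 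So the tower is at the center, confirming $C' \in \CONF_{sb}(b+1)$, and since no face-to-face move occurred, $C'$ has no outdated robots. (One should also note the degenerate case $b=0$ cannot occur since a block has size at least $1$, and $b=2$ is handled identically, giving $\CONF_{sb}(3)$.)

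Having reached $C' \in \CONF_{sb}(b+1)$ with no outdated robots, I would apply Lemma~\ref{lem:sbtogat} to conclude that the system achieves the gathering in $O(k^2)$ rounds. The single move from $C$ to $C'$ takes $O(1)$ rounds, so the total is $O(1) + O(k^2) = O(k^2)$ rounds, as claimed. The main (minor) obstacle is purely bookkeeping: verifying that the position of the tower-construction node after the merge matches the center of the new odd-sized block, so that the resulting configuration genuinely lies in $\CONF_{sb}$ rather than in some configuration outside $\CONF_{sp}$; this is exactly why the definition of $\CONF_{ssb}$ places $v_t$ at the $(Size(B1)/2)$-th node rather than the center of $B1$. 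Once this alignment is confirmed, the rest is a direct appeal to the already-established single-block lemma.
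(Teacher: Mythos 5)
Your proof is correct and follows essentially the same route as the paper: the single designated move merges $B2$ into $B1$ to give a configuration in $\CONF_{sb}(b+1)$ in $O(1)$ rounds, after which Lemma~\ref{lem:sbtogat} finishes the argument. The extra bookkeeping you supply (checking that $v_t$ lands at the center of the new odd block) is exactly the detail the paper leaves implicit, and it checks out, apart from a harmless off-by-one in the intermediate phrase ``$(b/2)$-th node counting from the original far end''; your final identification of $v_t$ is right.
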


\begin{proof}
Consider configuration $C\in\CONF_{ssb}(b)$. Then, the configuration becomes $\CONF_{sb}(b+1)$ in $O(1)$ rounds. After that, from Lemma \ref{lem:sbtogat}, the system achieves the gathering in $O(k^2)$ rounds.
\end{proof}

\begin{lemma}
\label{lem:sttogat}
From any semi-twin configuration $C\in\CONF_{st}$ with no outdated robots, the system achieves the gathering in $O(k^2)$ rounds.
\end{lemma}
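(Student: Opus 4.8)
The plan is to reduce any semi-twin configuration to a block leader configuration with a single robot movement, and then appeal to Lemma~\ref{lem:bltogat}. Recall that in $C\in\CONF_{st}(b)$ there are exactly two blocks, $B1$ of size $b$ and $B2$ of size $b+2$, sharing a hole of size $1$, and the only node that can be a tower is $v_t$, the node of $B2$ that neighbors that shared hole. The algorithm enables exactly one robot in this configuration: the robot of $B2$ sitting on the node adjacent to $v_t$ inside $B2$ (i.e.\ on the side opposite the shared hole); it moves onto $v_t$.

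First I would carry out the one move explicitly and identify the resulting configuration. Write $B2=(v_t,d_1,d_2,\dots,d_{b+1})$ with $d_1$ the neighbor of $v_t$ in $B2$. After the robot on $d_1$ steps onto $v_t$, the node $d_1$ becomes empty and $v_t$ carries one more robot than before, so $v_t$ becomes (or remains) a tower and is still the only tower (the other nodes of $B1$ and of $d_2,\dots,d_{b+1}$ are singly occupied). The occupied nodes now form exactly three blocks: $B1$ of size $b$, the singleton $\{v_t\}$ of size $1$ (odd), and $(d_2,\dots,d_{b+1})$ of size $b$; the original hole of size $1$ separates $B1$ from $\{v_t\}$, and the newly emptied node $d_1$ forms a hole of size $1$ separating $\{v_t\}$ from $(d_2,\dots,d_{b+1})$. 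Hence the configuration lies in $\CONF_{bl}(1,b)$, with its unique tower on the center node $v_t$ of the middle block. Because exactly one robot was ever enabled in $C$ and it moved, and we assumed $C$ had no outdated robots, the reached configuration also has no outdated robots; this transition costs $O(1)$ asynchronous rounds since the sole mover is activated within one round.

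Then I would conclude: from the reached $C'\in\CONF_{bl}(1,b)$ with no outdated robots, Lemma~\ref{lem:bltogat} gives gathering in $O(k^2)$ rounds, and adding the initial $O(1)$ rounds yields $O(k^2)$ overall, as claimed. I would also note that the whole trajectory stays inside $\CONF_{sp}$, since $\CONF_{st}\subseteq\CONF_{sp}$ and $\CONF_{bl}\subseteq\CONF_{sp}$, which is what lets the robots keep executing the second-phase rules. The step needing the most care is the verification that the post-move configuration is genuinely a member of $\CONF_{bl}(1,b)$ and not of some other subset of $\CONF_{sp}$: one must check all four defining conditions (three blocks with $Size(B0)=1$ odd and $Size(B1)=Size(B2)=b$; both intervening holes of size $1$; and the tower-construction-node condition on the center of $B0$), relying on the global hypothesis $2<k<n-3$ to ensure at least four empty nodes so that $B0$, $B1$, $B2$ are identifiable. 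The outdated-robot bookkeeping is routine here precisely because a semi-twin configuration enables only a single robot.
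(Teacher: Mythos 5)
Your proposal is correct and follows exactly the route of the paper's own (much terser) proof: one enabled robot moves onto $v_t$, producing a configuration in $\CONF_{bl}(1,b)$ in $O(1)$ rounds, after which Lemma~\ref{lem:bltogat} yields gathering in $O(k^2)$ rounds. Your explicit verification that the post-move configuration satisfies all the defining conditions of a block leader configuration is a detail the paper leaves implicit, but the argument is the same.
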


\begin{proof}
Consider configuration $C\in\CONF_{st}(b)$. Then, the configuration becomes $\CONF_{bl}(1,b)$ in $O(1)$ rounds. After that, from Lemma \ref{lem:bltogat}, the system achieves the gathering in $O(k^2)$ rounds.
\end{proof}

\begin{lemma}
\label{lem:sbltogat}
From any semi-block leader configuration $C\in\CONF_{sbl}$ with no outdated robots, the system achieves the gathering in $O(k^2)$ rounds.
\end{lemma}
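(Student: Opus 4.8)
The plan is to reduce a semi-block leader configuration to a block leader configuration in a single robot move, and then invoke Lemma~\ref{lem:bltogat}. Concretely, take $C\in\CONF_{sbl}(b_0,b_1)$, so there are exactly three blocks $B0$, $B1$, $B2$ with $Size(B0)=b_0$ even, $Size(B1)=b_1$, $Size(B2)=b_1+1$, with $B0$ sharing a hole of size $1$ with each of $B1$ and $B2$, and the unique tower-construction node $v_t$ being the $(b_0/2)$-th node of $B0$ counted from the side facing $B2$. By the algorithm the only robot that may move is the robot of $B2$ adjacent to the hole shared with $B0$, and its destination is that hole. Since $C$ has no outdated robots, this robot is the only one activated until it moves, which happens within $O(1)$ rounds.

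First I would determine the shape of the configuration right after this move. The moving robot fills the size-$1$ hole between $B0$ and $B2$, so that node becomes occupied and adjacent to $B0$; hence $B0$ grows to size $b_0+1$. Meanwhile the old endpoint of $B2$ it vacated becomes empty, reopening a size-$1$ hole between the enlarged $B0$ and the remaining $b_1$ nodes of $B2$. The block $B1$ and the hole between $B0$ and $B1$ are untouched. Therefore the resulting configuration again has exactly three blocks, of sizes $b_0+1$ (odd), $b_1$, $b_1$, arranged so that the odd-sized block shares a size-$1$ hole with each of the other two.

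Next I would verify that the tower now sits at the center of the enlarged $B0$, as required by the definition of $\CONF_{bl}$. Prepending one node to $B0$ on the $B2$-side shifts $v_t$ from the $(b_0/2)$-th position to the $(b_0/2+1)$-th position counted from the new $B2$-side; since $B0$ now has $b_0+1$ nodes, its center is precisely the $(b_0/2+1)$-th node, so $v_t$ becomes the center of $B0$. Hence the reached configuration lies in $\CONF_{bl}(b_0+1,b_1)$, and, as only the single designated robot moved, it has no outdated robots.

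To conclude, apply Lemma~\ref{lem:bltogat} to this configuration in $\CONF_{bl}(b_0+1,b_1)$: the gathering is achieved in $O(k^2)$ further rounds, and adding the initial $O(1)$ rounds gives the claimed $O(k^2)$ bound. I expect the only delicate point to be the bookkeeping in the second and third paragraphs --- tracking which node becomes the new center and confirming that $v_t$ stays the unique tower-construction node, so that the reached configuration genuinely matches $\CONF_{bl}$ rather than a near-miss; everything else is a direct appeal to an already-proven lemma.
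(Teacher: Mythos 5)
Your proposal is correct and follows exactly the paper's argument: a single $O(1)$-round move of the designated $B2$ robot turns $C\in\CONF_{sbl}(b_0,b_1)$ into a configuration in $\CONF_{bl}(b_0+1,b_1)$ with no outdated robots, after which Lemma~\ref{lem:bltogat} gives the $O(k^2)$ bound. The only difference is that you spell out the bookkeeping (block sizes and the shift of $v_t$ to the center of the enlarged block) that the paper's one-line proof leaves implicit, and that bookkeeping checks out.
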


\begin{proof}
Consider configuration $C\in\CONF_{sbl}(b_0,b_1)$.
Then, the configuration becomes $\CONF_{bl}(b_0+1,b_1)$ in $O(1)$ rounds.
After that, from Lemma~\ref{lem:bltogat}, the system achieves the gathering in $O(k^2)$ rounds.
\end{proof}

From Lemmas~\ref{lem:sbtogat},~\ref{lem:bltogat}, ~\ref{lem:ssbtogat}, ~\ref{lem:sttogat} and ~\ref{lem:sbltogat}, we have the following lemma.

\begin{lemma}
\label{lem:sptogat}
From any configuration $C\in\CONF_{sp}$ with no outdated robots, the system achieves the gathering in $O(k^2)$ rounds.
\end{lemma}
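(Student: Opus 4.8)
The plan is to prove Lemma~\ref{lem:sptogat} by a direct case analysis over the partition $\CONF_{sp}=\CONF_{sb}\cup\CONF_{bl}\cup\CONF_{ssb}\cup\CONF_{st}\cup\CONF_{sbl}$ fixed when the five subsets were defined. Given an arbitrary configuration $C\in\CONF_{sp}$ with no outdated robots, $C$ belongs to (exactly) one of the five subsets, so it suffices to treat each subset separately and chain to the convergence lemma already established for it: if $C\in\CONF_{sb}$ apply Lemma~\ref{lem:sbtogat}; if $C\in\CONF_{bl}$ apply Lemma~\ref{lem:bltogat}; if $C\in\CONF_{ssb}$ apply Lemma~\ref{lem:ssbtogat}; if $C\in\CONF_{st}$ apply Lemma~\ref{lem:sttogat}; and if $C\in\CONF_{sbl}$ apply Lemma~\ref{lem:sbltogat}. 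Each cited lemma already absorbs the ``no outdated robots'' hypothesis and already yields the $O(k^2)$ time bound, so the overall bound is the maximum over the five cases, which is still $O(k^2)$, and nothing further is needed.

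The two points I would make explicit, since the argument rests on them, are: (i) the five subsets genuinely exhaust $\CONF_{sp}$ and are pairwise disjoint, which follows from the parity/size bookkeeping in their definitions (a single-block configuration has one block with $Size(B0)$ odd or equal to $2$; a block-leader configuration has three blocks with $Size(B1)=Size(B2)$; a semi-single-block configuration has two blocks with $Size(B1)+Size(B2)$ odd; a semi-twin configuration has two blocks with $Size(B1)+Size(B2)$ even; a semi-block-leader configuration has three blocks with $Size(B2)=Size(B1)+1$), so the classification is exhaustive and unambiguous; and (ii) each of Lemmas~\ref{lem:sbtogat}, \ref{lem:bltogat}, \ref{lem:ssbtogat}, \ref{lem:sttogat} and \ref{lem:sbltogat} was proved for the \emph{entire} corresponding subset, with no restriction on the block-size parameters beyond what the definition already imposes, so the union of their hypotheses is precisely ``$C\in\CONF_{sp}$ with no outdated robots''.

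I expect the only real obstacle to be this bookkeeping verification --- confirming that no configuration with a single tower-construction node slips past all five templates and that a configuration cannot simultaneously match two of them. This is shallow rather than deep, but it is the natural place to be careless, so I would devote a sentence or two to it rather than leaving it implicit; once it is granted, Lemma~\ref{lem:sptogat} is immediate.
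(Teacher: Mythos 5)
Your proposal is correct and is exactly the paper's argument: Lemma~\ref{lem:sptogat} is obtained there by simply combining Lemmas~\ref{lem:sbtogat}, \ref{lem:bltogat}, \ref{lem:ssbtogat}, \ref{lem:sttogat} and \ref{lem:sbltogat} over the five subsets. The only superfluous part is your worry about exhaustiveness: the paper \emph{defines} $\CONF_{sp}$ as the union of the five subsets, so nothing can slip past the case analysis, and pairwise disjointness is not needed for the conclusion.
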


\begin{lemma}
\label{twoalg}
From any configuration $C\in\CONF_{sp}$ with outdated robots, the system achieves the gathering in $O(k^2)$ rounds.
\end{lemma}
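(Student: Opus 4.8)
The plan is to reduce the case of a configuration $C \in \CONF_{sp}$ with outdated robots to the already-treated case of configurations with no outdated robots (Lemma~\ref{lem:sptogat}), by showing that the outdated moves cannot do any harm: they either do nothing unexpected or, when they do cause an unexpected transition, the system still lands in $\CONF_{sp}$ in a configuration from which the remaining outdated robots are benign and Lemma~\ref{lem:sptogat} applies. First I would recall where outdated robots come from: during the first phase, some robot $A$ observed a first-phase configuration and decided to move, but before $A$ actually moves the system enters $\CONF_{sp}$ (it reaches a configuration with a single $1$.block, or one of the other four subsets). So the outdated robots are exactly the robots that were activated in the last first-phase configuration and have not yet completed their move; by the analysis of the first phase (in particular the argument in Lemma~\ref{tower-proof}), at most two such robots exist, and when there are two they are the symmetric pair.

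The key steps, in order: (1) Characterize which first-phase configurations can be immediately followed by a configuration in $\CONF_{sp}$ — this happens only when the first-phase move completes a single $1$.block (or a ``near'' version of it), so the outdated move of $A$ is precisely a move into the (nearly formed) $1$.block, i.e. a move of the same flavour as the second-phase moves. (2) Show that such an outdated move, executed from the $\CONF_{sp}$ configuration, keeps the system inside $\CONF_{sp}$: the outdated robot $A$ is a border robot of the unique block adjacent to a hole of size $1$, and moving it into that hole merges blocks exactly as the second-phase rules prescribe, so the result is again in $\CONF_{sb}$, $\CONF_{bl}$, $\CONF_{ssb}$, $\CONF_{st}$, or $\CONF_{sbl}$. (3) Verify the tower-construction node is preserved, i.e. no second tower is created: since $A$'s destination is a node of the existing block and the only possible tower node is the designated $v_t$, and $A$ does not target $v_t$ unless $v_t$ is its proper merge point, local weak multiplicity detection (a robot on a tower never moves) prevents a spurious tower, mirroring the argument of Lemma~\ref{tower-proof}. (4) Once all (at most two) outdated robots have either moved or been observed to be non-moving by every robot, the configuration is in $\CONF_{sp}$ with no outdated robots, and by Lemma~\ref{lem:sptogat} the gathering is achieved in $O(k^2)$ additional rounds; the ``flushing'' of the outdated robots costs only $O(1)$ rounds since each such robot is adjacent to its destination.

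I expect the main obstacle to be step (1)–(2): carefully enumerating the possible ``interface'' configurations between the first and second phase and checking that in every one of them the pending outdated move lands the system in one of the five subsets of $\CONF_{sp}$ rather than outside it (which would force a return to the first phase and potentially a loop). In particular one must handle the asynchronous sub-case where the symmetric pair $A, B$ from the first phase are both outdated and the system is already in $\CONF_{sp}$: here one shows that the two pending moves are the two symmetric block-joining moves that the second-phase rule for $\CONF_{sb}$ (or $\CONF_{bl}$) would itself have ordered, so whether $A$ moves first, $B$ moves first, or both move together, the intermediate and final configurations are exactly those already analyzed in Lemmas~\ref{lem:sbtobl}--\ref{lem:bltosb}. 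Once this coherence between the two phases' moves is established, the conclusion follows by appeal to Lemma~\ref{lem:sptogat}.
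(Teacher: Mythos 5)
There is a genuine gap, and it sits exactly where you predicted the main obstacle would be: your steps (1)--(2) assert that the pending outdated move is always ``of the same flavour as the second-phase moves'' and therefore keeps the system inside $\CONF_{sp}$. This is false, and the bulk of the paper's proof is devoted to the cases where it fails. The setup is that two symmetric robots $P$ and $Q$ both decided to move in a first-phase configuration $C^*$; $Q$'s move (plus possibly others) brings the system into $\CONF_{sp}$, and $P$ still holds its old decision. Because $P$ and $Q$ were symmetric in $C^*$, their destinations point in \emph{opposite} directions around the ring, so $P$'s pending move can point \emph{away} from the block that the second-phase rules would have it join. The paper handles this explicitly: for instance, in the semi-single block case, if $P$ moves in the opposite direction it becomes an isolated robot at distance two from the block, the configuration \emph{leaves} $\CONF_{sp}$, and one must argue that {\sf Alg1} then uniquely selects $P$ to walk back and re-enter $\CONF_{sp}$ with no outdated robots. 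Your proposal has no mechanism for this round trip; it assumes the merge always lands in one of the five subsets, ``which would force a return to the first phase and potentially a loop'' is dismissed rather than ruled out.

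The second missing ingredient is the structural classification that makes the case analysis tractable. The paper distinguishes whether $Q$ has already joined its target biggest $d$.block when $C$ is reached ({\sf TypeA}, so $C$ has a unique biggest block not containing $P$) or not ({\sf TypeB}, so $Q$ is isolated in $C$), and then runs through all five subsets of $\CONF_{sp}$ crossed with the possible locations of $P$ ($B0$, $B1$, or $B2$) and the relative block sizes. Most of these subcases are eliminated by contradiction (e.g.\ single block and semi-twin are impossible as $C$; many block-leader and semi-block-leader subcases contradict the symmetry of $C^*$ or the bound $k<n-3$); only a few survive, and in those the outdated move either coincides with the {\sf Alg2} move or triggers the detour described above. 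Without this enumeration you cannot justify the claim that the interface configurations are only ``near'' single $1$.blocks --- the system can enter $\CONF_{sp}$ through a block-leader or semi-block-leader configuration long before a single $1$.block exists, with $P$ sitting in any of the three blocks. Your step (3) on towers and step (4) on invoking Lemma~\ref{lem:sptogat} are fine once the preceding analysis is done, but as written the core of the argument is asserted rather than proved.
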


\begin{proof}
We call the algorithm to construct a single $1$.block {\sf Alg1}, and
the algorithm to achieve the gathering {\sf Alg2}.

To construct a configuration $C\in\CONF_{sp}$ with outdated robots during {\sf Alg1}, two robots $P$ and $Q$ have to observe a symmetric configuration $C^*$ if they are activated by the scheduler they will move. 
From the behavior of {\sf Alg1}, $P$ and $Q$ move toward their neighboring biggest $d$.blocks. Since $P$ and $Q$ observe a symmetric configuration, the directions of their movements are different each other. 

We assume that $P$ was activated by the scheduler however it executed only the look phase thus it doesn't move based on a configuration in {\sf Alg1},
and the system reaches a configuration $C\in\CONF_{sp}$ by the behavior of $Q$ (and other robots that move after $Q$ joins the biggest $d$.block). Note that $P$ and $Q$ are isolated robots or the border of a block in $C^*$. 

We have two possible types of configurations as $C$ based on the behavior of $Q$.
\begin{itemize*}
\item We say $C$ is of {\sf TypeA} if $Q$ joins its neighboring biggest $d$.block between $C^*$ and $C$. In this case, the join of $Q$ creates exactly one biggest $d$.block. From the behavior of {\sf Alg1}, other robots move toward this biggest $d$.block and thus there exists only one biggest $d$.block in $C$. In addition, since $Q$ and other robots do not move to the block of $P$ from the behavior of {\sf Alg1}, the biggest $d$.block in $C$ does not include $P$.
\item We say $C$ is of {\sf TypeB} if $Q$ does not join its neighboring biggest $d$.block between $C^*$ and $C$. In this case, $Q$ is an isolated robot in $C$. In addition, positions of non-isolated robots other than $P$ and $Q$ are the same in $C$ and $C^*$.
\end{itemize*}






We consider five cases according to the type of the configuration $C$: single block, block leader, semi-single block, semi-twin, and semi-block leader.

\paragraph{Single block}
Consider the case that $C$ is single block. However, if $C$ is of {\sf TypeA}, there exist at least two blocks. If $C$ is of {\sf TypeB}, there exist at least one isolated robot. Both cases contradict the single block configuration.

\paragraph{Semi-twin}
Consider the case that $C$ is semi-twin. Since the number of nodes occupied by robots is even, there exists a tower. However, since the behavior of {\sf Alg1} does not make a tower, this is a contradiction.


\paragraph{Semi-single block} 
Consider the case that $C$ is semi-single block.

If $C$ is of {\sf TypeA}, since $P$ is not in the biggest block, $P$ is in $B2$. If $P$ moves to $B1$ by the same direction of {\sf Alg2}, the system reaches a single block configuration. If $P$ moves to the opposite direction, the system reaches a configuration with a single biggest block and one isolated robot $P$. In this configuration, only $P$ can move by {\sf Alg1} and the system reaches a semi-single block configuration with no outdated robots. Therefore, the system reaches a single block configuration by {\sf Alg2} and achieves the gathering in $O(k^2)$ rounds. 

If $C$ is of {\sf TypeB}, $Q$ is in $B2$. However, this means $Q$ moved out from $B1$ and thus the configuration was single block before $Q$ moves. Thus, this is a contradiction.


\paragraph{Block leader}
Consider the case that $C$ is block leader.

First, we assume $b_0>b_1$, that is, $B0$ is the biggest block in $C$. 
Without loss of generality, we assume $P$ is in $B1$ because the size of $B1$ and $B2$ is same.
\begin{itemize*}
\item Consider the case that $C$ is of {\sf TypeA}.
\begin{itemize*}
\item 
Consider the case that the size of $B1$ is bigger than $1$, then the size of $B2$ is also bigger than $1$.
If $P$ is the border of $B1$ that shares a hole with $B0$, the destinations of $P$ by {\sf Alg1} and {\sf Alg2} are the same. 
If $P$ is the other border of $B1$, the biggest $d$.block of the destination in $C^*$ is $B2$.
Because $2<k<n-3$, the size of hole $H$ between $P$ and $B2$ is more than two. 
Because $C^*$ is symmetric and $B2$ does not move between $C^*$ and $C$ and the size of holes other than $H$ is one, it is a contradiction.
\item 
Consider the case that the size of $B1$ is equal to $1$, the the size of $B2$ is also equal to $1$.
Then the destination of $B2$ is $B0$ by {\sf Alg2}.
If the destination of $P$ is $B0$, the system achieves the semi-single block or single block with no outdated robots.
If the destination of $P$ is $B2$, 
the system reaches a configuration with a single biggest block and at least one isolated robot $P$ with no outdated robots.
However, by {\sf Alg1}, the robot on the biggest block cannot move, 
and the isolated robots are the one that move to the biggest block.
Therefore, the system reaches a semi-single block configuration or single block configuration with no outdated robots and achieves the gathering in $O(k^2)$ rounds.
\end{itemize*}
\item Consider the case that $C$ is of {\sf TypeB}.
Because $Q$ is an isolated robot, $Q$ is part of $B2$ which is of size $1$.
Thus, $B1$ is also of size $1$.
\end{itemize*}  


Second, we assume $b_0<b_1$. Then, there exist two biggest $d$.blocks, and thus $C$ is of {\sf TypeB}. (Note that, in {\sf TypeA}, the biggest $d$.block is only one.) Therefore, $Q$ is an isolated robot and in $B0$. Without loss of generality, we assume that the destination of $Q$ is $B2$ because the size of $B1$ and $B2$ is same. Since the size of a hole between $B1$ and $Q$ is one in $C$, $Q$ belongs to $B1$ in $C^*$. This implies that the size of $B1$ is $b_1+1$ and the size of $B2$ is $b_1$ in $C^*$. Since $Q$ moves to the smaller block in $C^*$, this is a contradiction.


Finally, we assume $b_0=b_1$.
There are three biggest $d$.blocks.
If $C$ is of {\sf TypeA}, because the biggest $d$.block is only one to which $Q$ belongs.
This is a contradiction.
If $C$ is of {\sf TypeB}, because $Q$ is an isolated robot, each size of $B0$, $B1$ and $B2$ is equal to $1$.
Then, because there are only three robots and $C^*$ is symmetric, $P$ and $Q$ are not $B0$ and the destination of both of them is $B0$ by type 1 of {\sf Alg1}. 
This destination is same as by {\sf Alg2}.
%


\paragraph{Semi-block leader}
Consider the case that $C$ is semi-block leader.
Then, $B0$ or $B2$ is the biggest block in $C$.

If $b_0=b_1+1$, there are two biggest blocks. 
This implies $C$ is of {\sf TypeB}, and then $Q$ is an isolated robot.
Consequently, $B1$ contains only $Q$, and thus $B0$ and $B2$ contain two robots. 
Since $Q$ moves to the biggest block in $C^*$, $Q$ is an isolated robot in $C^*$ (Otherwise, $Q$ is in a block $B0$ or $B2$ with the size three). Remind that $P$ and $Q$ are symmetric in $C^*$.
However, $P$ is in block $B0$ or $B2$. 
This is a contradiction, and thus this case never happens.

If $b_0>b_1+1$, $B0$ is the biggest block in $C$. 
If $C$ is of {\sf TypeA}, $Q$ joins $B0$. 
Then, $P$ is in $B1$ or $B2$.
By the definition of $B2$, the size of $B2$ is bigger than $1$.
\begin{itemize*}
\item Consider the case that $P$ is in $B1$.
If the destination of $P$ in {\sf Alg1} is to $B0$, $Q$ joins from $B2$ because $C^*$ is symmetric and both destination of $P$ and $Q$ is $B0$.
However, because the size of $B2$ is bigger than $B1$, it is a contradiction.
If the destination of $Q$ in {\sf Alg1} is to $B2$, then the position of $B2$ does not change from $C^*$ because the size of $B2$ is bigger than $1$.
However, because the size of hole between $B1$ and $B2$ is more than $2$ and $C^*$ is symmetric, then there is another hole which size is more than $2$, it is a contradiction.

\item Consider the case that $P$ is in $B2$.
If the destination of $P$ in {\sf Alg1} is to $B0$, then it is same as the destination by {\sf Alg2} and the other robot does not move. Therefore, the system achieves the gathering in $O(k^2)$ rounds.
If the destination of $P$ in {\sf Alg1} is to $B1$ and the size of $B1$ is bigger than $1$, then
the position of $B1$ does not change from $C^*$.
However, because the size of hole between $B1$ and $B2$ is more than $2$ and $C^*$ is symmetric, then there is another hole which size is more than $2$, it is a contradiction.
If the destination of $P$ in {\sf Alg1} is to $B1$ and the size of $B1$ is equal to $1$, then $B1$ moves to $B0$, the size of $B2$ is $2$, and the robot $R$ in $B2$ other than $P$ moves to $B0$. Then, the members of $B0$ cannot move. If $P$ moves, after $R$ joins $B0$, then new destination of $P$ is to $B0$ by both of {\sf Alg1} and {\sf Alg2}.  Therefore, the system achieves the gathering in $O(k^2)$ rounds.
\end{itemize*}

If $C$ is of {\sf TypeB}, $Q$ is an isolated robot and thus $B1$ contains only $Q$. Consequently, $B2$ contains two robots. Since $Q$ moves toward the biggest block, $Q$ moves toward $B0$. This means $P$ is in $B2$ and moves toward $B0$. However, since $P$ has a hole of size one in its direction, $Q$ should also have a hole of size one in its direction in $C^*$. Then, $Q$ joins a block immediately after $Q$ moves, and thus $C$ never becomes of {\sf TypeB}. Therefore, this case never happens.

If $b_0<b_1+1$, then $B2$ is the biggest $d$.block. 
If $C$ is of {\sf TypeA}, $Q$ joins $B2$. Then, $P$ is in $B0$ or $B1$. 

\begin{itemize*}
\item Consider the case that $P$ is in $B0$. 
If the destination of $P$ is $B1$, then $Q$ joins $B2$ from $B0$ because the size of a hole between $B0$ and $B2$ is one.
However, in $C^*$, the configuration is block leader, it is a contradiction.
If the destination of $P$ is $B2$, then $Q$ joins $B2$ from the side of $B1$.
If $Q$ is a member of $B1$ in $C^*$, then the size of $B1$ is bigger than $B2$.
If $Q$ is not a member of $B1$ in $C^*$, then until all block members to which $Q$ belongs in $C^*$ join $B2$, the size of hole between the block and $B2$ is $1$ because the size of hole between $P$ and $B2$ is $1$.
Therefore, in $C$, all members to which $Q$ belongs in $C^*$ join $B2$.
By considering the size of $B1$ and $B2$, $Q$ is an isolated robot in $C^*$.
Therefore, $P$ in $B0$ is an isolated robot, that is, the number of robots in this case is $4$.
This is a contradiction.   

\item Consider the case that $P$ is in $B1$. 
\begin{itemize*}
\item Consider the case that $P$ is a neighbor to $B2$. Then, in $C$, $Q$ moves to $B2$, so $Q$ is in $B0$. Because both size of holes between $B1$ and $B0$ and between $B0$ and $B2$ is $1$, and $P$ and $Q$ are symmetric in $C^*$, the size of a hole between $B1$ and $B2$ is also $1$. Because $n\geq k+4$, it is a contradiction.
\item Consider the case that $P$ is a neighbor to $B0$. Then, $P$ tries to move toward $B0$. Since $P$ and $Q$ are symmetric in $C^*$, $Q$ joins $B2$ from the side of $B1$.

Remind that the difference between the size of $B1$ and that of $B2$ is one in $C$ and the size of $B1$ does not increase from $C^*$ to $C$. Therefore, the size of $B1$ and that of $B2$ are the same immediately before $Q$ joins $B2$. On the other hand, since $P$ and $Q$ are symmetric in $C^*$ and $P$ belongs to $B1$, $Q$ also belongs to $B1$. This implies the size of $B1$ is bigger than that of $B2$ in $C^*$. This is a contradiction.
\end{itemize*}
\end{itemize*}

If $C$ is of {\sf TypeB}, $Q$ is an isolated robot.

\begin{itemize*}
\item Consider the case that $P$ is in $B0$. Then, the member of $B1$ is only $Q$, and the size of $B2$ is $b_1+1=2$. Because $b_0<b_1+1=2$, the member of $B0$ is only $P$. Then, the number of robot is $4$, and this is a contradiction.
\item Consider the case that $P$ is in $B1$. Then, the member of $B0$ is only $Q$. Since $Q$ moves toward the biggest block, $Q$ moves toward $B2$. This implies $Q$ is in $B1$ in $C^*$, and consequently there exist only two blocks with the same size in $C^*$. Since $P$ and $Q$ are symmetric in $C^*$, the sizes of holes in both directions are the same. This implies $C^*$ is periodic, and this is a contradiction.
\item Consider the case that $P$ is in $B2$. Then, since $P$ is in the biggest block in $C$, $P$ is also in the biggest block in $C^*$. Consequently, $P$ moves toward the neighboring biggest block which is the same size as $B2$. This implies the size of $B0$ is at least $b_1+1$, and this is a contradiction.
\end{itemize*}

\end{proof}

From Lemmas \ref{lem-time}, \ref{lem:sptogat} and \ref{twoalg}, we have the following theorem.

\begin{theorem}
From any non-periodic initial configuration without tower, the system achieves the gathering in $O(n^2)$ rounds.
\end{theorem}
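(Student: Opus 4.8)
The plan is to obtain the statement simply by concatenating the time bounds already proved for the two phases, taking due care at the moment the system passes from the first phase to the second. First I would apply Lemma~\ref{lem-time}: from an arbitrary non-periodic, tower-free initial configuration, the first-phase rules bring the system within $O(n^2)$ rounds to a configuration that contains a single $1$.block (or, as noted when the two algorithms are combined, possibly to an earlier configuration that already lies in $\CONF_{sp}$). Either way, the system enters $\CONF_{sp}$ within $O(n^2)$ rounds. Lemmas~\ref{periodic-proof} and~\ref{tower-proof} are what make this legitimate: they guarantee that during the whole first phase no periodic configuration and no tower is ever created, so the first-phase rules stay well defined and the bound of Lemma~\ref{lem-time} really applies all the way down to a single $1$.block.

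Second, once a configuration $C \in \CONF_{sp}$ is reached, every robot recognizes this locally (as established when $\CONF_{sp}$ was defined) and therefore switches to the second-phase algorithm. If no robot is holding an outdated first-phase view, Lemma~\ref{lem:sptogat} finishes the gathering in $O(k^2)$ further rounds. If some robot is outdated --- which can happen because a robot may have executed its look phase under a first-phase configuration and not yet moved when the system slid into $\CONF_{sp}$ --- then Lemma~\ref{twoalg} still guarantees gathering in $O(k^2)$ rounds. Hence, from the instant the system is in $\CONF_{sp}$, gathering is completed in $O(k^2)$ rounds in all cases.

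It then remains to add the two bounds: the execution gathers all robots in $O(n^2) + O(k^2)$ rounds, and since $2 < k < n-3$ we have $O(k^2) = O(n^2)$, giving the claimed $O(n^2)$ bound. I do not expect a genuine obstacle in this theorem itself --- the technical content lies in the supporting lemmas, above all in Lemma~\ref{twoalg} --- but the one point I would argue explicitly is that no desynchronization other than the one handled by Lemma~\ref{twoalg} can persist across the phase change: within a single phase the algorithm activates at most the one or two robots with the (locally) maximal view, so the only outdated robots that can survive into $\CONF_{sp}$ are exactly those covered by that lemma, and the two pieces therefore compose cleanly.
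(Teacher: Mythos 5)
Your proposal is correct and follows exactly the paper's route: the paper derives the theorem directly by combining Lemma~\ref{lem-time} for the first phase with Lemmas~\ref{lem:sptogat} and~\ref{twoalg} for the second, just as you do. Your added remarks on why the phase transition composes cleanly (via Lemmas~\ref{periodic-proof} and~\ref{tower-proof}, and the handling of outdated robots) only make explicit what the paper leaves implicit.
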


\section{Concluding remarks}

We presented a new protocol for mobile robot gathering on a ring-shaped network. Contrary to previous approaches, our solution neither assumes that global multiplicity detection is available nor that the network is started from a non-symmetric initial configuration. Nevertheless, we retain very weak system assumptions: robots are oblivious and anonymous, and their scheduling is both non-atomic and asynchronous. We would like to point out some open questions raised by our work.
First, the recent work of \cite{Devismes09} showed that for the \emph{exploration with stop} problem, randomized algorithm enabled that periodic and symmetric initial configurations are used as initial ones. However the proposed approach is not suitable for the non-atomic CORDA model. It would be interesting to consider randomized protocols for the gathering problem to bypass impossibility results.
Second, investigating the feasibility of gathering without any multiplicity detection mechanism looks challenging. Only the final configuration with a single node hosting robots could be differentiated from other configurations, even if robots are given as input the exact number of robots.

\bibliographystyle{plain}
\bibliography{biblio} 

\end{document}